\def\pgf@circ@spst@path#1{\pgf@circ@bipole@path{spst}{#1}}
\tikzset{switch/.style = {\circuitikzbasekey, /tikz/to path=\pgf@circ@spst@path, l=#1}}
\tikzset{spst/.style = {switch = #1}}
\let\proof\@undefined                        
\let\endproof\@undefined                  
\algnewcommand{\algorithmicgoto}{\textbf{go to}}%
\algnewcommand{\Goto}[1]{\algorithmicgoto~\ref{#1}}%
\algnewcommand{\LineComment}[1]{\Statex \(\triangleright\) #1}
\algnewcommand{\LineCommentN}[1]{\Statex \hspace{1cm}\(\triangleright\) #1}
\newtheorem{prop}{Proposition} 
\newtheorem{thm}{Theorem}
	\newtheorem{assumption}{Assumption}
\newtheorem{lem}{Lemma}
\newtheorem{defn}{Definition}
\newtheorem{rem}{Remark}
\newtheorem{problem}{Problem}
\let\oldbibliography\thebibliography
\renewcommand{\thebibliography}[1]{%
  \oldbibliography{#1}%
}
\newcommand{\tdiag}{\textstyle{\mathrm{diag}}}
\newcommand{\yong}[1]{{\color{black} #1}}
\newcommand{\fa}[1]{{\color{black} #1}}
\newcommand{\mk}[1]{{\color{black} #1}}
\begin{document}

\title{\LARGE \bf Optimal Dynamic Control of Bounded Jacobian Discrete-Time Systems via Interval Observers} 

\author{%
Mohammad Khajenejad\\
\thanks{
M. Khajenejad is with the Departments of Mechanical Engineering, University of Tulsa, Tulsa, OK, USA (e-mail: \texttt{mok7673@utulsa.edu}).}
}

\maketitle
\thispagestyle{empty}
\pagestyle{empty}

\begin{abstract}
This paper presents an optimal dynamic control framework for bounded Jacobian nonlinear discrete-time (DT) systems with nonlinear observations affected by both state and process noise. Rather than directly stabilizing the uncertain system, we focus on stabilizing an interval observer in a higher-dimensional space, whose states bound the true system states. Our nonlinear dynamic control method introduces added flexibility over traditional static and linear approaches, effectively compensating for system nonlinearities and enabling potentially tighter closed-loop intervals. Additionally, we establish a separation principle that allows for the design of observer and control gains. We further derive tractable matrix inequalities to ensure system stability in the closed-loop configuration. The simulation results show that the proposed dynamic control approach significantly outperforms a static counterpart method.
 \end{abstract}

\section{Introduction} 
Estimation and control in real-world systems are often hindered by significant nonlinearities and model uncertainties. In areas like engineering settings, systems biology, neuroscience, or collaborative robotics, complex dynamics and interactions make accurate modeling and parameter identification challenging \cite{izhikevich2007dynamical}. Such uncertainties cannot be ignored and necessitate robust, adaptive design strategies. Yet, conventional control methods often rely on precise models—an impractical assumption in the presence of noise, modeling errors, or external stochastic noise/disturbances. This highlights the need for estimation-based control approaches that remain effective under bounded, possibly non-stochastic uncertainties where classical Kalman filter-based approaches may fall short.

In such scenarios, set-valued observers, particularly interval-valued observers~\cite{khajenejad2024distributed}, have been widely used to estimate a range of admissible values for system states at each time step, using input-output data and bounds on uncertainties~\cite{tahir2021synthesis,khajenejad2021intervalACC}. Techniques to minimize interval length include applying monotone systems theory~\cite{farina2000positive} to achieve cooperative observer error dynamics, using interval arithmetic or Müller’s theorem~\cite{kieffer2006guaranteed}, and designing observer gains via linear, semidefinite, or mixed-integer programming~\cite{efimov2013interval,9790824,9803272}, where gains scale with model uncertainty magnitudes.

For control design, interval observer-based methods have been explored, primarily in linear continuous-time (CT) systems. For example, the research in~\cite{wang2020intervalcontrol} introduced an interval observer-based state feedback controller for switched linear systems, and the work in~\cite{zhang2024semi} developed a semi-global interval observer framework for robust control in linear time-invariant (LTI) systems with input saturation.

In nonlinear settings, interval observer designs have been applied with some limitations. In the work in~\cite{efimov2013control}, a dynamic output feedback approach used an interval state observer for a class of nonlinear CT systems, though it was restricted to linear observations and neglected measurement noise. Similarly, the authors in~\cite{he2018control} proposed a Luenberger-like interval observer for nonlinear CT systems with linear observations, also excluding noise considerations. More recently,~\cite{efimov2024interval} introduced a nonlinear feedback control strategy for a class of nonlinear CT systems using interval estimates, with stability conditions for estimation and regulation errors derived via linear matrix inequalities (LMIs). However, these works did not address nonlinear discrete-time (DT) systems subject to noise in both state and measurements.

In our previous work~\cite{khajenejad2024optimalcontrolstatic}, we proposed a static feedback control strategy for bounded Jacobian nonlinear DT systems with nonlinear observations, incorporating both state and process noise. This paper aims to advance the approach in \cite{khajenejad2024optimalcontrolstatic} by introducing the following improvements.

\emph{{Contributions}.} We address the design of an optimal dynamic control strategy for stabilization and disturbance attenuation in bounded Jacobian nonlinear discrete-time (DT) systems with uncertainty. This strategy is founded on the dynamic stabilization of a higher-dimensional interval observer, which tightly bounds the true system states from above and below. Unlike static approaches~\cite{efimov2013control,khajenejad2024optimalcontrolstatic}, our nonlinear dynamic control method introduces additional degrees of freedom through extra controller gains, which effectively mitigate potential instabilities arising from system nonlinearities. This flexibility expands the set of feasible control signals, resulting in tighter closed-loop intervals. Additionally, we establish a separation principle, allowing for design of observer and control gains, which facilitates the derivation of tractable programs for control gain design.
  
 \section{Preliminaries}
 
 {\emph{{Notation}.}} $\mathbb{R}^n,\mathbb{R}^{n  \times p},\mathbb{D}_n,\mathbb{N},\mathbb{N}_n$ denote the $n$-dimensional Euclidean space and the sets of $n$ by $p$ matrices, $n$ by $n$ diagonal matrices, natural numbers and natural numbers up to $n$, respectively. 
 For $M \in \mathbb{R}^{n \times p}$, $M_{i,j}$ denotes $M$'s entry in the $i$'th row and the $j$'th column, $M^\oplus\triangleq \max(M,\mathbf{0}_{n,p})$, $M^\ominus=M^\oplus-M$ and $|M|\triangleq M^\oplus+M^\ominus$, where $\mathbf{0}_{n,p}$ is the zero matrix in $\mathbb{R}^{n \times p}$ and $\mathbf{0}_{n}$ is the zero vector in $\mathbb{R}^n$, \yong{while $\textstyle{\mathrm{sgn}}(M) \in \mathbb{R}^{n \times p}$ is the element-wise sign of $M$ with $\textstyle{\mathrm{sgn}}(M_{i,j})=1$ if $M_{i,j} \geq 0$ and $\textstyle{\mathrm{sgn}}(M_{i,j})=-1$, otherwise.} 
 Furthermore, if $p=n$, 
 \mk{$M \succ \mathbf{0}_{n,n}$ and $M \prec \mathbf{0}_{n,n}$ (or $M \succeq \mathbf{0}_{n,n}$ and $M \preceq \mathbf{0}_{n,n}$) denote that $M$ is positive and negative (semi-)definite, respectively}. Furthermore, $\tdiag(A_1,\dots, A_n)$ denotes a block diagonal matrix with $A_1, \dots,A_n$ being its diagonal block matrix entries, while $\tdiag_n(A)\triangleq \tdiag(\underbrace{A,\dots,A}_{n \ \text{times}})$. Finally $I_n$ denotes the identity matrix in $\mathbb{R}^n$. 

Next, we introduce some useful definitions and results.
\begin{defn}[Interval]\label{defn:interval}
{An (multi-dimensional) interval {$\mathcal{I} \triangleq [\underline{s},\overline{s}]  \subset 
\mathbb{R}^n$} is the set of all real vectors $x \in \mathbb{R}^n$ that satisfies $\underline{s} \le x \le \overline{s}$, where $\underline{s}$, $\overline{s}$ and $\|\overline{s}-\underline{s}\|\mk{_{\infty}\triangleq \max_{i \in \{1,\cdots,n\}}s_i}$ are called minimal vector, maximal vector and \mk{interval} width of $\mathcal{I}$, respectively}. \yong{An interval matrix can be defined similarly.} 
\end{defn}
\begin{defn}[Jacobian Sign-Stability] \label{defn:JSS}
A mapping $f :\mathcal{X} \subset \mathbb{R}^{n} \to  \mathbb{R}^{p}$ is (generalized) Jacobian sign-stable (JSS), if its (generalized) Jacobian matrix entries \fa{do} not change signs on its domain, i.e., if either of the following hold: 
\begin{align*}
&\forall x \in \mathcal{X}, \forall i \in \mathbb{N}_p,\forall j \in \mathbb{N}_n , J_f(x)_{i,j} \geq 0 \ \text{(positive JSS)}\\  
&\forall x \in \mathcal{X}, \forall i \in \mathbb{N}_p,\forall j \in \mathbb{N}_n , J_f(x)_{i,j} \leq 0 \  \text{(negative JSS)},
\end{align*} 
where $J_f(x)$ denotes the Jacobian matrix of $f$ at $x \in \mathcal{X}$. 
\end{defn}
\begin{prop}[Mixed-Monotone Decomposition]
\cite[Proposition 2]{9867741}\label{prop:JSS_decomp}
Let $f :\mathcal{X} \subset \mathbb{R}^{n} \to  \mathbb{R}^{p}$ and suppose $\forall x \in \mathcal{X}, J_f(x) \in [\underline{J}_f,\overline{J}_f]$, where $\underline{J}_f,\overline{J}_f$ are known matrices in $\mathbb{R}^{p \times n}$. Then, $f$ can be decomposed into a (remainder) affine mapping $Hx$ and a JSS mapping $\mu (\cdot)$, in an additive form: 
\begin{align}\label{eq:JSS_decomp}
\forall x \in \mathcal{X},f(x)=\mu(x)+Hx,
\end{align}
 where $H$ is a matrix in $\mathbb{R}^{p \times n}$, that satisfies the following 
 \begin{align}\label{eq:H_decomp}
 \forall (i,j) \in \mathbb{N}_p \times \mathbb{N}_n, H_{i,j}=(\overline{J}_f)_{i,j} \ \lor \ H_{i,j}=(\underline{J}_f)_{i,j}.    
 \end{align}
\end{prop}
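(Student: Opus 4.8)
The plan is to prove the statement constructively: exhibit the decomposition explicitly and then verify the two defining properties directly. First, I would fix an arbitrary matrix $H \in \mathbb{R}^{p \times n}$ whose entries obey \eqref{eq:H_decomp}; such an $H$ plainly exists (each of the $pn$ entries has two admissible values, so there are $2^{pn}$ admissible choices), and the real content of the claim is that \emph{every} such $H$ works. I would then set the candidate remainder $\mu : \mathcal{X} \to \mathbb{R}^p$ to be $\mu(x) \triangleq f(x) - Hx$. With this choice, \eqref{eq:JSS_decomp} holds by construction, so the entire burden reduces to showing that the affine part is exactly $Hx$ (immediate) and that $\mu$ is (generalized) JSS in the sense of Definition \ref{defn:JSS}.

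The key step is a componentwise inspection of the (generalized) Jacobian of $\mu$. Since $x \mapsto Hx$ is continuously differentiable with constant Jacobian $H$, the sum rule for the Clarke generalized Jacobian (adding a $C^1$ map translates the generalized Jacobian by that map's ordinary Jacobian) gives $J_\mu(x) = J_f(x) - H$ for every $x \in \mathcal{X}$, as a set identity in the nonsmooth case. Now fix a pair $(i,j)$. If \eqref{eq:H_decomp} selected $H_{i,j} = (\overline{J}_f)_{i,j}$, then the hypothesis $J_f(x) \in [\underline{J}_f,\overline{J}_f]$ forces $J_f(x)_{i,j} \le (\overline{J}_f)_{i,j}$ for all $x \in \mathcal{X}$, hence $(J_\mu(x))_{i,j} = J_f(x)_{i,j} - (\overline{J}_f)_{i,j} \le 0$ throughout $\mathcal{X}$; symmetrically, if $H_{i,j} = (\underline{J}_f)_{i,j}$, then $(J_\mu(x))_{i,j} \ge 0$ throughout $\mathcal{X}$. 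Thus each entry of $J_\mu$ keeps a fixed sign over the whole domain, so $\mu$ is JSS; in particular, the uniform choices $H = \overline{J}_f$ and $H = \underline{J}_f$ make $\mu$ negative JSS and positive JSS, respectively, matching the dichotomy of Definition \ref{defn:JSS}.

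The only genuinely delicate point is the nonsmooth bookkeeping: one must justify that $J_\mu(x) = J_f(x) - H$ holds for the \emph{generalized} Jacobian (and not merely for classical derivatives), and that the interval bound $\underline{J}_f \le J_f(x) \le \overline{J}_f$ is inherited by every element of that set-valued object — this is precisely where the ``generalized'' qualifier in the hypothesis earns its place. Once that is in hand, the rest is the elementary entrywise sign check above; there is no optimization, monotonicity argument, or fixed-point construction hiding here, only a judicious choice of the affine remainder, so I expect the formal write-up to be short.
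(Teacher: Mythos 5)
Your proposal is correct and is essentially the canonical argument: the paper itself offers no proof of this proposition (it is imported verbatim from \cite[Proposition 2]{9867741}), and the construction $\mu(x)\triangleq f(x)-Hx$ followed by the entrywise sign check on $J_\mu(x)=J_f(x)-H$ is exactly how that reference establishes it. The only caveat worth recording is that for a mixed entrywise choice of $H$ your argument yields the ``each Jacobian entry keeps a fixed sign'' version of JSS rather than the all-nonnegative/all-nonpositive dichotomy displayed in Definition~\ref{defn:JSS} --- but that is a looseness in the paper's stated definition (its informal phrasing, and the usage throughout, intend the per-entry reading), not a gap in your proof.
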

\begin{prop}[Tight Decomposition Functions for JSS Mappings {\cite[Proposition 4 \& Lemma 3]{9867741}}]\label{prop:tight_decomp}
Let $\mu:{\mathcal{Z}} \subset \mathbb{R}^{n_z} \to \mathbb{R}^p$ be a JSS mapping on its domain. Then, it admits a tight decomposition function for each $\mu_i,\ i \in \mathbb{N}_p$ as follows: 
\begin{align}\label{eq:JJ_decomp}
\mu_{d,i}({z}_1,{z}_2)\hspace{-.1cm}=\hspace{-.1cm}\mu_i(D^i{z}_1\hspace{-.1cm}+\hspace{-.1cm}(I_{n_z}\hspace{-.1cm}-\hspace{-.1cm}D^i){z}_2), 
\end{align}
{for any ordered ${z_1, z_2 \in \mathcal{Z}}$}, where $D^i$ 
is a binary diagonal matrix determined by which vertex of the interval {$[{z}_2,{z}_1]$ {or $[z_1,z_2]$}} that maximizes {(if $z_2 \leq z_1$) or minimizes {(if $z_2 > z_1$)}} the function 
$\mu_i$ that can be found in closed-form as: 
\begin{align}\label{eq:Dj}
D^i=\textstyle{\mathrm{diag}}(\max(\textstyle{\mathrm{sgn}}(\yong{\overline{J}^{\mu}_i}),\mathbf{0}_{1,{n_z}})).
\end{align}
Furthermore, for any interval domain $[\underline{z} \ \overline{z}]\subseteq \mathcal{Z}$, the following inequality holds:
\begin{align}\label{eq:mm_bounding}
\mu_d(\overline{z},\underline{z})-\mu_d(\underline{z},\overline{z})\leq F_{\mu}(\overline{z}-\underline{z}), 
\end{align}
where $\mu_d=[\mu_{d,1},\dots, \mu_{d,p}]^\top$ and 
\begin{align}\label{eq:F_bounding}
F_{\mu}=\overline{J}^\oplus_{\mu}+\underline{J}^\ominus_{\mu}.
\end{align}
Finally, if $\mu$ is computed through the mixed-monotone decomposition in Proposition \ref{prop:JSS_decomp}, then
\begin{align}\label{eq:JSS_Jacobian}
\overline{J}^\mu=\overline{J}^f-H \ \text{and} \ \underline{J}^\mu=\underline{J}^f-H.
\end{align}  
\end{prop}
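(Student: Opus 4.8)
The plan is to establish the three assertions in order: that $\mu_{d,i}$ in \eqref{eq:JJ_decomp} with $D^i$ given by \eqref{eq:Dj} is a \emph{tight} decomposition function; the width bound \eqref{eq:mm_bounding}--\eqref{eq:F_bounding}; and the Jacobian identities \eqref{eq:JSS_Jacobian}. For the first, the crucial point is that JSS-ness of $\mu$ makes each scalar component $\mu_i$ coordinatewise monotone on $\mathcal{Z}$: nondecreasing in coordinate $j$ if the (constant-sign) entry $J_{\mu_i}(\cdot)_j$ is nonnegative and nonincreasing if it is nonpositive. Hence, over any sub-box $[\underline{z},\overline{z}]\subseteq\mathcal{Z}$, $\mu_i$ attains its maximum at the vertex obtained by setting coordinate $j$ to $\overline{z}_j$ wherever that Jacobian entry is nonnegative and to $\underline{z}_j$ otherwise, and its minimum at the opposite vertex. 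The $0/1$ diagonal matrix $D^i=\tdiag(\max(\textstyle{\mathrm{sgn}}(\overline{J}^{\mu}_i),\mathbf{0}_{1,n_z}))$ is exactly the selector of the ``upper'' coordinates, so $\mu_{d,i}(\overline{z},\underline{z})=\mu_i(D^i\overline{z}+(I_{n_z}-D^i)\underline{z})=\max_{z\in[\underline{z},\overline{z}]}\mu_i(z)$ and, symmetrically, $\mu_{d,i}(\underline{z},\overline{z})=\min_{z\in[\underline{z},\overline{z}]}\mu_i(z)$; these identities immediately yield $\mu_{d,i}(z,z)=\mu_i(z)$, $\mu_{d,i}$ being nondecreasing in its first argument and nonincreasing in its second, the enclosure $\mu_{d,i}(\underline{z},\overline{z})\le\mu_i(z)\le\mu_{d,i}(\overline{z},\underline{z})$ for $z\in[\underline{z},\overline{z}]$, and tightness, since exact extrema cannot be improved upon. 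When instead $z_1\le z_2$, the same coordinate selection computes the minimum of $\mu_i$ over $[z_1,z_2]$, which is why \eqref{eq:JJ_decomp} applies verbatim to any ordered pair.

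For the width bound, fix a component $i$ and set $\xi^+:=D^i\overline{z}+(I_{n_z}-D^i)\underline{z}$ and $\xi^-:=D^i\underline{z}+(I_{n_z}-D^i)\overline{z}$, so that $\mu_{d,i}(\overline{z},\underline{z})-\mu_{d,i}(\underline{z},\overline{z})=\mu_i(\xi^+)-\mu_i(\xi^-)$ and $\xi^+-\xi^-=(2D^i-I_{n_z})(\overline{z}-\underline{z})$. By a (generalized) mean value theorem applied to $\mu_i$ along the segment joining $\xi^-$ and $\xi^+$, this difference equals $g\,(\xi^+-\xi^-)$ for some row vector $g$ whose entries are convex combinations of entries of $J_{\mu_i}$ over $\mathcal{Z}$, hence satisfy $\underline{J}^{\mu}_i\le g\le\overline{J}^{\mu}_i$ entrywise. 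Splitting the resulting sum $\sum_j g_j(2D^i_{jj}-1)(\overline{z}-\underline{z})_j$ by the value of $D^i_{jj}\in\{0,1\}$ and using $\overline{z}-\underline{z}\ge\mathbf{0}_{n_z}$: where $D^i_{jj}=1$, the JSS sign pattern forces $0\le g_j\le(\overline{J}^{\mu}_i)^\oplus_j$, and where $D^i_{jj}=0$ it forces $(\underline{J}^{\mu}_i)_j\le g_j\le 0$, so that $-g_j\le(\underline{J}^{\mu}_i)^\ominus_j$; in both cases the $j$-th term is bounded above by $\big(\overline{J}^\oplus_{\mu}+\underline{J}^\ominus_{\mu}\big)_{ij}(\overline{z}-\underline{z})_j$. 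Summing over $j$ and over the components $i$ gives precisely \eqref{eq:mm_bounding} with $F_{\mu}=\overline{J}^\oplus_{\mu}+\underline{J}^\ominus_{\mu}$.

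The last identities follow from the chain rule: if $f=\mu(\cdot)+Hx$ as in Proposition \ref{prop:JSS_decomp}, then $J_f(x)=J_\mu(x)+H$ for every $x\in\mathcal{X}$, hence $J_\mu(x)=J_f(x)-H\in[\underline{J}_f-H,\ \overline{J}_f-H]$, i.e., $\overline{J}^\mu=\overline{J}^f-H$ and $\underline{J}^\mu=\underline{J}^f-H$; moreover the entrywise rule \eqref{eq:H_decomp}, under which each $H_{ij}$ equals $(\overline{J}_f)_{ij}$ or $(\underline{J}_f)_{ij}$, is exactly what forces these two bounding matrices to be entrywise sign-consistent, so that $\mu$ is genuinely JSS and the first two parts apply to it. I expect the main obstacle to be the bookkeeping in the first part: one must verify that the closed-form $D^i$ tracks the JSS sign pattern of $\mu_i$ at every coordinate --- including degenerate coordinates where a Jacobian-bound entry vanishes, which must be assigned consistently with the chosen remainder $H$ --- and that the selected box vertex is a \emph{global} extremum over $[\underline{z},\overline{z}]$ rather than merely a critical point, which is where the coordinatewise monotonicity coming from sign-stability is indispensable.
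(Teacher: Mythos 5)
The paper does not prove this proposition at all --- it is imported verbatim from the cited reference \cite[Proposition~4 \& Lemma~3]{9867741} --- so there is no in-paper argument to compare against. Your proof is correct and is, in substance, the standard argument behind that cited result: coordinatewise monotonicity from the constant sign pattern makes the $D^i$-selected box vertex a global maximizer (resp.\ minimizer), which yields both the decomposition-function properties and tightness; the mean value theorem along the segment from $\xi^-$ to $\xi^+$, split by the value of $D^i_{jj}$, gives \eqref{eq:mm_bounding} with $F_\mu=\overline{J}^\oplus_\mu+\underline{J}^\ominus_\mu$; and the chain rule gives \eqref{eq:JSS_Jacobian}. The one caveat you flag in passing is worth making explicit: the closed form \eqref{eq:Dj} uses the sign of the \emph{bound} $\overline{J}^\mu_i$ rather than of the Jacobian itself, so the vertex-selection step implicitly assumes the bounds are sign-consistent with the JSS pattern (for each entry, either $\underline{J}^\mu_{ij}\ge 0$ or $\overline{J}^\mu_{ij}\le 0$); with a loose bound whose sign disagrees with the true constant sign of that entry, $D^i$ would select the wrong vertex. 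This is a hypothesis of the cited result rather than a flaw in your argument, and everything else goes through.
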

\section{Problem Formulation} \label{sec:Problem}
\noindent\textbf{\emph{System Assumptions.}} 
Consider the following nonlinear discrete-time (DT) system:  
\begin{align} \label{eq:system}
\begin{array}{ll}
\mathcal{G}: \begin{cases} {x}_{k+1} &= f(x_k)+Bu_k+Ww_k    \\
                                              y_k &= g(x_k)+Du_k+Vv_k
                                              \end{cases},
\end{array}\hspace{-0.2cm}
\end{align}
where $x_k \in \mathcal{X} \subset \mathbb{R}^n$, $u_k \in \mathbb{R}^m$, $y_k \in \mathbb{R}^l$, $w_k \in \mathcal{W} \triangleq [\underline{w},\overline{w}] \subset \mathbb{R}^{n_w}$ and $v_k \in \mathcal{V} \triangleq [\underline{v},\overline{v}] \subset \mathbb{R}^{n_v}$ are state, control input, output (measurement), process disturbance and measurement noise signals, respectively. Furthermore, ${f}:\mathcal{X}  \to \mathbb{R}^n$ and ${g}:\mathcal{X}  \to \mathbb{R}^l$  are known, nonlinear vector fields, and $B \in \mathbb{R}^{n} \times \mathbb{R}^{n}, W \in \mathbb{R}^{n} \times\mathbb{R}^{n_w},D \in \mathbb{R}^{l} \times\mathbb{R}^{b}$ and $V \in \mathbb{R}^{l} \times\mathbb{R}^{n_v}$ are known matrices. We assume the following. 
\begin{assumption}\label{ass:mixed_monotonicity}
 The initial state $x_0$ satisfies $x_0 \in \mathcal{X}_0 = [ \underline{x}_0,\overline{x}_0]$, where $\underline{x}_0$ and $\overline{x}_0$ {are} known initial state bounds, and the values of the output/measurement $y_k$ signals are known at all times. Moreover, the mappings $f$ and $g$ are known, differentiable and have bounded Jacobians, i.e., satisfy
 \begin{align*}
 \underline{J}_{f} \leq J_{\nu}(x) \leq \overline{J}_{\nu}, \ \forall \nu \in \{f,g\}, \ \forall x \in \mathcal{X},
 \end{align*}
 where $J_f$ and $J_g$ are the Jacobian matrix functions and $\overline{J}_{f},\underline{J}_{f} \in \mathbb{R}^{n \times n}, \overline{J}_{g},\underline{J}_{g} \in \mathbb{R}^{l \times n}$ are  priori known matrices.
\end{assumption}
Based on Assumption \ref{ass:mixed_monotonicity} and by leveraging Proposition \ref{prop:JSS_decomp}, the matrices $A \in \mathbb{R}^{n \times n}$ and $C \in \mathbb{R}^{l \times n}$ are chosen such that the following decompositions hold (cf. Definition \ref{defn:JSS}):
\begin{align} \label{eq:JSS_decom}
\hspace{-.2cm}\forall x \in \mathcal{X}: \begin{cases}f(x)=Ax+\phi(x) \\ g(x)=Cx+\psi(x) \end{cases} \hspace{-.4cm} s.t. \ \phi,\psi \ \text{are JSS in} \ \mathcal{X}.
\end{align} 
Recall that according to Proposition \ref{prop:tight_decomp}, the JSS mappings $\phi$ and $\psi$ admit matrices $F_{\phi}$ and $F_{\psi}$ that can be computed via \eqref{eq:F_bounding} and satisfy \eqref{eq:mm_bounding}\footnote{The considered class of nonlinear systems in \eqref{eq:system} can be easily extended to be nonlinear in the input $u_k$ and noise signals $w_k,v_k$, i.e., $f(x_k,u_k,w_k)$, and $g(x_k,u_k,v_k)$, by applying the decomposition in \eqref{eq:JSS_decom} to $f$ and $g$ and decompose them into linear functions in and JSS mappings.}. Without loss of generality, we assume $F_{\phi}$ is invertible. The reason is that by construction, $F_{\phi}$ is a non-negative matrix (cf. \eqref{eq:F_bounding}). So, even if the initially computed $F_{\phi}$ is not invertible, one can increase its diagonal elements until the modified $F_{\phi}$ becomes a diagonally dominant, and so invertible matrix~\cite{sootla2017block}. Furthermore, since the modified $F_{\phi}$ is greater than or equal to the initial one, it still satisfies \eqref{eq:mm_bounding} given the non-negativity of all items in both sides of the inequality.
 Our goal is to design the control input signal $u_k$ as a (implicit) function of system measurements/outputs to stabilize the closed-loop DT nonlinear system \eqref{eq:system}--\eqref{eq:JSS_decom} in an optimal manner. Due to the existing uncertainties, we do not assume that the state is measurable (even the initial state is uncertain), and we only measure the output $y_k$.
\begin{problem}\label{eq:stab_synth}
Given the nonlinear system in \eqref{eq:system}--\eqref{eq:JSS_decom}, as well as Assumption \ref{ass:mixed_monotonicity}, 
synthesize a dynamic control system for the plant $\mathcal{G}$ so that the resulting closed-loop system is stable, and the design is optimal in the sense of $\mathcal{H}_{\infty}$ controller synthesis, guaranteeing the uniform boundedness of the worst-case states and minimizing the effect of the disturbance signals. 
\end{problem}
\section{Proposed Control Design}
Before discussing our control design strategy, we first briefly introduce the notions of framer and interval observer which we extensively use throughout the paper.
\begin{defn}[Correct Interval \mk{Framers}]\label{defn:framers}
Given the nonlinear plant $\mathcal{G}$, 
the sequences of signals $\{\overline{x}_k,\underline{x}_{k}\}_{k=0}^{\infty}$ in $\mathbb{R}^n$ are called upper and lower framers for the states of \eqref{eq:system}, if 
\begin{align}\label{eq:correctness}
\forall k \in [0,\infty), \ \underline{x}_k \leq x_k \leq \overline{x}_k.
\end{align}
In other words, starting from the initial interval $\underline{x}_0 \leq x_0 \leq \overline{x}_0$, the true state of the system in \eqref{eq:system}, $x_k$, is guaranteed to evolve within the interval flow-pipe $[\underline{x}_k,\overline{x}_k]$, for all $k \geq 0$. Finally, any dynamical system $\hat{\mathcal{G}}$ whose states are correct framers for the states of the plant $\mathcal{G}$, i.e., any (tractable) algorithm that returns upper and lower framers for the states of plant $\mathcal{G}$ is called a \emph{correct} interval \mk{framer} system for \eqref{eq:system}. 
\end{defn}
\begin{defn}[\mk{Framer} Error]\label{defn:error}
Given 
state framers \mk{$\underline{x}_k \leq \overline{x}_k, k \geq 0$}, $\varepsilon_k \triangleq \overline{x}_k-\underline{x}_k$, \mk{whose infinite norm denotes} the interval width \mk{of $[\underline{x}_k,\overline{x}_k]$ (cf. Definition \ref{defn:interval})},  
 is called the \mk{framer} error. It can be easily verified that 
correctness (cf. Definition \ref{defn:framers}) implies that $\varepsilon_k \geq 0, \forall k \geq 0.$  
\end{defn}
\begin{defn}[Stability and \mk{Interval Observer}]\label{defn:stability}
A corresponding interval \mk{framer} to system \eqref{eq:system} is input-to-state stable (ISS), if the \mk{framer} error sequence $\{\varepsilon_{k}\}_{k=0}^{\infty}$ is bounded as follows: 
\begin{align}
\|\varepsilon_{k}\|_2 \leq \beta(\|\varepsilon_0\|,k)+\rho(\|\delta\|_{\ell_{\infty}}), \forall k \geq 0,
\end{align}
where $\delta \triangleq [\delta^\top_w \ \delta^\top_v]^\top$, $\beta$ and $\rho$ are functions of classes $\mathcal{KL}$\footnote{A continuous function $\alpha: [0,a) \to \mathbb{R}_{\ge 0}$ belongs to class $\mathcal{K}$ if it is strictly increasing with $\alpha(0)=0$. Moreover, the function $\alpha$ belongs to class $\mathcal{K}_{\infty}$ if it belongs to class $\mathcal{K}$, $a=\infty$, and $\lim_{r \to \infty}\alpha(r)=\infty$, i.e., $\alpha$ is unbounded. Finally,
A continuous function $\lambda : {[0,a) \times [0,\infty)} \to \mathbb{R}_{\ge 0}$ is {considered to be in} class $\mathcal{KL}$ if, for every fixed $t \geq 0$, {the function} $\lambda(s, t)$ {is in} class $\mathcal{K}$; for each fixed $s \geq 0$, $\lambda(s, t)$ {decreases with respect to $t$ and satisfies $\lim_{t \to \infty} \lambda(s, t) = 0$}} and $\mathcal{K}_{\infty}$, respectively, and
$\|\delta\|_{\ell_{\infty}}=\sup_{k \geq 0} \|\delta_k\|_2=\|\delta\|_2$ is the $\ell_{\infty}$ signal norm.
An ISS interval framer is called an interval observer.
\begin{defn}[$\mathcal{H}_{\infty}$-Robust \& Optimal Interval Observer]\label{defn:L_1} 
An interval framer $\hat{\mathcal{G}}$ is  $\mathcal{H}_{\infty}$-robust and optimal, if the  $\mathcal{H}_{\infty}$-gain of the framer error system $\tilde{\mathcal{G}}$, defined below, is minimized:
\begin{align}\label{eq:L1_Def}
\|\tilde{\mathcal{G}}\|_{\ell_2} \triangleq \sup_{\|\delta\|_{\ell_2}=1} \|e\|_{\ell_2},
\end{align}
where $\|\nu\|_{\ell_2} \triangleq  \sqrt{\sum_{k=0}^{\infty}(\|\nu_k\|^2_2)}$ is the $\ell_2$ signal norms for $\nu \in \{e,\delta\}$. 
\end{defn}
\end{defn}
Our proposed control strategy involves the following steps:

\noindent \textbf{(i) Constructing a Robust Framer System.} Given that the actual system state in \eqref{eq:system}--\eqref{eq:JSS_decom} is uncertain and the state is not directly measured, we first design a robust framer system that, using the observed output signal, provides upper and lower bounds on the actual system state. This framer system, constructed to handle any control input $u_k$, specified observer gain, and any bounded noise and disturbance realizations, yields reliable interval estimates of the system state.\\

\vspace{-.2cm}
\noindent \textbf{(ii) Designing a Dynamic Control Strategy for the Framer System.} Next, we develop a dynamic control strategy for the framer system created in step (i) to ensure stability of the actual system through stabilizing the framers.\\

\vspace{-.2cm}
\noindent \textbf{(iii) Establishing the Closed-Loop Framer System.} By applying the control inputs from step (ii) to the framer system from step (i) and augmenting the framer and the dynamic controller systems, we derive the closed-loop framer system. We stabilize this closed-loop system by designing the observer and dynamic control independently, facilitated by a separation principle. This results in an ISS closed-loop configuration for the actual plant.  
\subsection{Interval Framers} \label{sec:obsv}
Given the nonlinear plant $\mathcal{G}$, in order to address Problem \ref{eq:stab_synth}, we first propose an interval observer (cf. Definition \ref{defn:framers}) for $\mathcal{G}$ similar to \cite{9790824}, through the following dynamical system:

\vspace{-.3cm}
{\small
\begin{align}\label{eq:observer}
\begin{array}{rl}
\overline{x}_{k+1}\hspace{-.2cm}&=\hspace{-.1cm}(A\hspace{-.1cm}-\hspace{-.1cm}LC)^\oplus \overline{x}_k\hspace{-.1cm}-\hspace{-.1cm}(A\hspace{-.1cm}-\hspace{-.1cm}LC)^\ominus \underline{x}_k\hspace{-.1cm}+\hspace{-.1cm}(LV)^\ominus \overline{v}\hspace{-.1cm}-\hspace{-.1cm}(LV)^\oplus \underline{v} \\
\hspace{-.1cm}&+\phi_d(\overline{x}_k,\underline{x}_k)\hspace{-.1cm}+\hspace{-.1cm}Ly_k\hspace{-.1cm}+\hspace{-.1cm}(B\hspace{-.1cm}-\hspace{-.1cm}LD)u_k\hspace{-.1cm}+\hspace{-.1cm}W^{\oplus}\overline{w}\hspace{-.1cm}-\hspace{-.1cm}W^{\ominus}\underline{w}\\
\hspace{-.1cm}&+L^{\ominus}\psi_d(\overline{x}_k,\underline{x}_k)\hspace{-.1cm}-\hspace{-.1cm}L^{\oplus}\psi_d(\underline{x}_k,\overline{x}_k),\\ \vspace{.1cm}
\underline{x}_{k+1}\hspace{-.2cm}&=\hspace{-.1cm}(A\hspace{-.1cm}-\hspace{-.1cm}LC)^\oplus \underline{x}_k\hspace{-.1cm}-\hspace{-.1cm}(A\hspace{-.1cm}-\hspace{-.1cm}LC)^\ominus \overline{x}_k\hspace{-.1cm}+\hspace{-.1cm}(LV)^\ominus \underline{v}\hspace{-.1cm}-\hspace{-.1cm}(LV)^\oplus \overline{v} \\
\hspace{-.1cm}&+L^{\ominus}\psi_d(\underline{x}_k,\overline{x}_k)\hspace{-.1cm}-\hspace{-.1cm}L^{\oplus}\psi_d(\overline{x}_k,\underline{x}_k)\\
\hspace{-.1cm}&+\phi_d(\underline{x}_k,\overline{x}_k)\hspace{-.1cm}+\hspace{-.1cm}Ly_k\hspace{-.1cm}+\hspace{-.1cm}(B\hspace{-.1cm}-\hspace{-.1cm}LD)u_k\hspace{-.1cm}+\hspace{-.1cm}W^{\oplus}\underline{w}\hspace{-.1cm}-\hspace{-.1cm}W^{\ominus}\overline{w},
 \end{array}
\end{align}
}

\noindent where $\phi_d:\mathbb{R}^n \times \mathbb{R}^n \to \mathbb{R}^n$ and $\psi_d:\mathbb{R}^n \times \mathbb{R}^n \to \mathbb{R}^l$ are tight mixed-monotone decomposition functions of the JSS mappings $\phi$ and $\psi$, computed via Proposition~\ref{prop:tight_decomp}. Moreover, $L \in \mathbb{R}^{n \times l}$ is a to-be-designed observer gain matrix. 
Defining $\varepsilon_k \triangleq \overline{x}_k-\underline{x}_k$, we obtain the following system from \eqref{eq:observer} that governs 
 the dynamics of the framer errors:
\begin{align}\label{eq:farmer_error}
 \varepsilon_{k+1}\hspace{-.1cm}=|A-LC|\varepsilon_k+\delta^{\phi}_k+\hspace{-.1cm}|L|\delta^{\psi}_k+|LV|\delta_v+|W|\delta_w,   
\end{align}
where $\delta_{\alpha} \triangleq \overline{\alpha}_k-\underline{\alpha}_k, \forall \alpha \in \{w,v\}$, while
\begin{align}\label{eq:deltas}
{\delta}^{\rho}_k \triangleq \rho_d(\overline{x}_k,\underline{x}_k)-\rho_d(\underline{x}_k,\overline{x}_k), \ \forall \rho \in \{\phi,\psi\}.
\end{align}
Proposition \eqref{prop:frame}, 
ensures the framer property of \eqref{eq:observer} 
for \eqref{eq:system}.
\begin{prop}\label{prop:frame}
The system in \eqref{eq:observer} constructs a framer system for the plant in \eqref{eq:system}--\eqref{eq:JSS_decom} for all values of the control signal $u_k$, all realizations of the bounded noise and disturbance $v_k,w_k$ and all values of the gain $L$. Consequently, \eqref{eq:farmer_error} is a positive system. Furthermore, if $L$ is a solution to the SDP in \cite[Equations (17) and (19)]{9790824}, then \eqref{eq:farmer_error} is an ISS system, i.e., \eqref{eq:observer} is an interval observer for \eqref{eq:system}--\eqref{eq:JSS_decom}.
\end{prop}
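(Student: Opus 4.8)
The plan is to establish three claims in sequence: (a) the correctness/framer property of \eqref{eq:observer}, (b) the positivity of the error system \eqref{eq:farmer_error}, and (c) the ISS conclusion under the SDP feasibility assumption.

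First I would prove correctness by induction on $k$, showing that $\underline{x}_k \le x_k \le \overline{x}_k$ implies $\underline{x}_{k+1} \le x_{k+1} \le \overline{x}_{k+1}$. The base case is Assumption \ref{ass:mixed_monotonicity}. For the inductive step, I would substitute the decompositions $f(x_k)=Ax_k+\phi(x_k)$ and $g(x_k)=Cx_k+\psi(x_k)$ from \eqref{eq:JSS_decom} into \eqref{eq:system}, eliminate $g(x_k)$ by writing $Cx_k = y_k - Du_k - Vv_k - \psi(x_k)$ so that $x_{k+1} = (A-LC)x_k + \phi(x_k) + L\psi(x_k) + Ly_k + (B-LD)u_k + Ww_k - LVv_k$. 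Then I would bound each term separately: the linear term $(A-LC)x_k$ is bounded above by $(A-LC)^\oplus \overline{x}_k - (A-LC)^\ominus \underline{x}_k$ using the standard fact that for $M=M^\oplus - M^\ominus$ and $\underline{x}\le x\le \overline{x}$ one has $Mx \le M^\oplus\overline{x} - M^\ominus\underline{x}$; the noise terms $Ww_k$ and $-LVv_k$ are handled identically with $w_k\in[\underline w,\overline w]$, $v_k\in[\underline v,\overline v]$; and the nonlinear terms $\phi(x_k)$, $\psi(x_k)$ are bounded using the tight mixed-monotone decomposition property from Proposition \ref{prop:tight_decomp}, namely $\phi_d(\underline x_k,\overline x_k) \le \phi(x_k) \le \phi_d(\overline x_k,\underline x_k)$, and likewise for $\psi$, after splitting $L\psi(x_k) = L^\oplus\psi(x_k) - L^\ominus\psi(x_k)$ to keep the signs consistent. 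Matching these bounds term-by-term against \eqref{eq:observer} gives $x_{k+1}\le\overline x_{k+1}$; the lower bound is symmetric. This step holds for every $u_k$, every admissible $w_k,v_k$, and every $L$, since none of the bounding inequalities used any property of these quantities.

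Next, positivity of \eqref{eq:farmer_error} is immediate: subtracting the two lines of \eqref{eq:observer} and collecting terms yields \eqref{eq:farmer_error} with system matrix $|A-LC|$ and additive inputs $\delta^\phi_k$, $|L|\delta^\psi_k$, $|LV|\delta_v$, $|W|\delta_w$; since $|A-LC|$, $|L|$, $|LV|$, $|W|$ are entrywise nonnegative by construction, and $\delta^\phi_k,\delta^\psi_k \ge 0$ by \eqref{eq:mm_bounding} while $\delta_v,\delta_w\ge 0$ by definition of intervals, the nonnegativity of $\varepsilon_0$ (which follows from correctness, cf. Definition \ref{defn:error}) propagates forward, so $\varepsilon_k\ge 0$ for all $k$.

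Finally, for the ISS claim I would invoke \eqref{eq:mm_bounding}–\eqref{eq:F_bounding} of Proposition \ref{prop:tight_decomp} to over-bound the nonlinear error inputs: $\delta^\phi_k \le F_\phi \varepsilon_k$ and $\delta^\psi_k \le F_\psi \varepsilon_k$. Substituting into \eqref{eq:farmer_error} and using positivity (so that the comparison system dominates $\varepsilon_k$ componentwise) gives $\varepsilon_{k+1} \le (|A-LC| + F_\phi + |L|F_\psi)\varepsilon_k + |LV|\delta_v + |W|\delta_w$. The SDP in \cite[Eqs. (17),(19)]{9790824} is precisely the condition that there exists $L$ and a diagonal Lyapunov certificate making $|A-LC| + F_\phi + |L|F_\psi$ Schur-stable (in a linearized/majorized sense), which for a positive linear system is equivalent to ISS with respect to the bounded inputs $\delta_v,\delta_w$; I would cite the corresponding result in \cite{9790824} to conclude that $\{\varepsilon_k\}$ satisfies the ISS bound of Definition \ref{defn:stability}, so \eqref{eq:observer} is an interval observer.

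The main obstacle is the inductive step in part (a): one must be careful to eliminate $Cx_k$ via the measurement equation \emph{before} applying the sign-split bounds, and to pair the $L^\oplus$/$L^\ominus$ decomposition with the correct arguments $(\overline x_k,\underline x_k)$ versus $(\underline x_k,\overline x_k)$ of $\psi_d$, so that every bound points in the right direction; getting the bookkeeping of these eight mixed terms to line up exactly with the right-hand sides of \eqref{eq:observer} is where the real work lies, while parts (b) and (c) are then largely mechanical given Propositions \ref{prop:JSS_decomp} and \ref{prop:tight_decomp} and the cited SDP result.
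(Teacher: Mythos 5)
Your proposal takes essentially the same route as the paper: the paper's own proof is a one-line deferral to \cite[Theorems 1 \& 2]{9790824}, noting only that the known terms $Bu_k$ and $Du_k$ are carried along into the framers, and what you wrote is a faithful reconstruction of that argument (induction for correctness, nonnegativity of $|A-LC|$, $|L|$, $|LV|$, $|W|$ and of the $\delta$-inputs for positivity, and the comparison system $\varepsilon_{k+1}\leq(|A-LC|+F_{\phi}+|L|F_{\psi})\varepsilon_k+|LV|\delta_v+|W|\delta_w$ together with the cited SDP for ISS).

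One sign slip in your inductive step would derail the term-by-term matching if carried through literally: substituting $LCx_k=L\big(y_k-Du_k-Vv_k-\psi(x_k)\big)$ gives
\begin{align*}
x_{k+1}=(A-LC)x_k+\phi(x_k)-L\psi(x_k)+Ly_k+(B-LD)u_k+Ww_k-LVv_k,
\end{align*}
i.e., the $\psi$ term enters with a \emph{minus} sign, not the plus sign you wrote. It is precisely because $-L=L^{\ominus}-L^{\oplus}$ that the upper framer in \eqref{eq:observer} contains $L^{\ominus}\psi_d(\overline{x}_k,\underline{x}_k)-L^{\oplus}\psi_d(\underline{x}_k,\overline{x}_k)$; had the term been $+L\psi(x_k)$, the roles of $L^{\oplus}$ and $L^{\ominus}$ would be swapped and your bound would not reproduce the observer's right-hand side. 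With that sign corrected, the rest of your bookkeeping (the treatment of $(A-LC)x_k$, $Ww_k$, $-LVv_k$, $\phi(x_k)$, the positivity argument, and the ISS step via the cited SDP) is correct and matches the intended proof.
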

\begin{proof}  
The proof follows the lines of~\cite[Theorems 1 \& 2]{9790824} with the slight modification that the terms $Bu_k$ and $Du_k$ are added to the state and output equations, respectively, and to the corresponding framers, since they are treated as known variables when upper and lower framers are computed. 
\end{proof}
\subsection{Closed-Loop System and Separation Principle}\vspace{-0.05cm}
In this subsection, we use the robust state estimates, i.e., the interval framers in \eqref{eq:observer}, for $\mathcal{H}_{\infty}$ controller synthesis, in the sense of guaranteeing the uniform boundedness of the worst-case states and minimizing the effect of the disturbance signals. To this end, we consider a nonlinear dynamic controller with disturbance rejection terms: 
\begin{align}\label{eq:control}
\begin{array}{rl}
 x^c_{k+1}&=A_cx^c_k+B_cy^c_k+K^x_{\nu}(\phi_d(\overline{x},\underline{x})-\phi_d(\underline{x},\overline{x}))\\
 u_k&=C_cx^c_k+D_cy^c_k+K^u_{\nu}(\phi_d(\overline{x},\underline{x})-\phi_d(\underline{x},\overline{x})),
 \end{array}
\end{align}
where $y^c_k=[\underline{x}^\top_k \ \overline{x}^\top_k]^\top$, and $A_c \in \mathbb{R}^{n \times n},B_c=[-\underline{K}_b \ \overline{K}_b] \in \mathbb{R}^{n \times 2n}, C_c \in \mathbb{R}^{m \times n}, D_c=[-\underline{K}_d \ \overline{K}_d] \in \mathbb{R}^{m \times 2n}, K^x_{\nu} \in \mathbb{R}^{n \times n},K^u_{\nu} \in \mathbb{R}^{m \times n}$ are to-be-designed dynamic controller gain matrices. The goal is to stabilize the framer system \eqref{eq:observer} which consequently stabilizes the trajectory of the actual closed-loop system \eqref{eq:system}--\eqref{eq:JSS_decom}, as well as to minimize the effect of the augmented noise in the sense of minimizing the $\mathcal{H}_{\infty}$ norm of the closed-loop error system.
\begin{rem} 
The nonlinear feedback components of the proposed strategy compensate for the potential dynamic instabilities that are due to the nonlinearity of the system and extend the feasible set of admissible control signals, hence potentially resulting in tighter closed-loop intervals compared to linear control approaches, for example, in \cite{efimov2013control}. Moreover, it is worth emphasizing that we are proposing a dynamic control strategy that by design can be considered as a generalization of and improvement to the static feedback control we proposed in our previous work \cite{khajenejad2024optimalcontrolstatic}, (see the comparisons in the Section \ref{sec:example}). 
\end{rem}
\vspace{-.2cm}
In order to obtain the dynamics of the closed-loop system, we plug $u_k$ from \eqref{eq:control} into \eqref{eq:observer} and use the fact that 
\begin{align}
\begin{array}{c}
Ly_k+(B-LD)u_k=Bu_k+L(y_k-Du_k)\\
=Bu_k+L(Cx_k+\psi(x_k)+Vv_k),
\end{array}
\end{align}
which results in the following closed-loop system: 

\vspace{-.3cm}
{\small
\begin{align} \label{eq:eqiv_sys}
\begin{array}{rl}
\overline{x}_{k+1}\hspace{-.1cm}&=\hspace{-.1cm}((A\hspace{-.1cm}-\hspace{-.1cm}LC)^\oplus\hspace{-.1cm}+\hspace{-.1cm}B\overline{K}_d) \overline{x}_k\hspace{-.1cm}-((A\hspace{-.1cm}-\hspace{-.1cm}LC)^\ominus\hspace{-.1cm}+\hspace{-.1cm}B\underline{K}_d) \underline{x}_k\\
&+(LV)^\ominus \overline{v}\hspace{-.1cm}-\hspace{-.1cm}(LV)^\oplus \underline{v}\hspace{-.1cm}+\hspace{-.1cm}LVv_k\hspace{-.1cm}+\hspace{-.1cm}W^{\oplus}\overline{w}\hspace{-.1cm}-\hspace{-.1cm}W^{\ominus}\underline{w}  \\
&+\phi_d(\overline{x}_k,\underline{x}_k)\hspace{-.1cm}+\hspace{-.1cm}L^{\oplus}\overline{\delta}^{\psi}_k\hspace{-.1cm}+\hspace{-.1cm}L^{\ominus}\underline{\delta}^{\psi}_k\hspace{-.1cm}+\hspace{-.1cm}BK^u_{\nu}\delta^{\phi}_k\hspace{-.1cm}+\hspace{-.1cm}LCx_k,\\
\underline{x}_{k+1}\hspace{-.1cm}&=((A\hspace{-.1cm}-\hspace{-.1cm}LC)^\oplus\hspace{-.1cm}-\hspace{-.1cm}B\underline{K}_d) \underline{x}_k\hspace{-.1cm}-\hspace{-.1cm}((A\hspace{-.1cm}-\hspace{-.1cm}LC)^\ominus\hspace{-.1cm}-\hspace{-.1cm}B\overline{K}_d) \overline{x}_k\\
&+(LV)^\ominus \underline{v}\hspace{-.1cm}-\hspace{-.1cm}(LV)^\oplus \overline{v}\hspace{-.1cm}+\hspace{-.1cm}LVv_k\hspace{-.1cm}+\hspace{-.1cm}W^{\oplus}\underline{w}\hspace{-.1cm}-\hspace{-.1cm}W^{\ominus}\overline{w}  \\
&+\phi_d(\underline{x}_k,\overline{x}_k)\hspace{-.1cm}-\hspace{-.1cm}L^{\oplus}\overline{\delta}^{\psi}_k\hspace{-.1cm}-\hspace{-.1cm}L^{\ominus}\underline{\delta}^{\psi}_k\hspace{-.1cm}+\hspace{-.1cm}BK^u_{\nu}\delta^{\phi}_k\hspace{-.1cm}+\hspace{-.1cm}LCx_k,
 \end{array}
\end{align}
}

\vspace{-.2cm}
\noindent where ${\delta}^{\phi}_k,{\delta}^{\psi}_k$ are defined in \eqref{eq:deltas}, $\overline{\delta}^{\rho}_k \triangleq \rho_d(\overline{x}_k,\underline{x}_k)-\rho(x_k)$, and $\underline{\delta}^{\rho}_k \triangleq \rho(x_k)-\rho_d(\underline{x}_k,\overline{x}_k), \forall \rho \in \{\phi,\psi\}$. The system in \eqref{eq:eqiv_sys} still contains the actual state $x_k$ which is not measured and hence is not accessible. To resolve this issue, we apply a change if variables and define the following sequences of \emph{upper and lower closed-loop errors}: 
\begin{align}\label{eq:frame_error}
\begin{array}{rl}
\overline{e}_k &\triangleq \overline{x}_k-x_k \Rightarrow x_k=\overline{x}_k-\overline{e}_k, \\
\underline{e}_k &\triangleq x_k-\underline{x}_k \Rightarrow x_k=\underline{x}_k+\underline{e}_k.
\end{array}
\end{align}
Combining \eqref{eq:frame_error} and \eqref{eq:system} and noting that: $$L(y_k-Cx_k-\psi(x_k)-Du_k-Vv_k)=0, \ \forall k \ge 0, \ \text{returns:}$$ 
$x_{k+1}=(A-LC)x_k-L\psi(x_k)+\phi(x_k)
+Ww_k+Ly_k-LVv_k+(B-LD)u_k$,
which, along with \eqref{eq:eqiv_sys} and the fact that $M=M^{\oplus}-M^{\ominus}$ for any matrix $M$ results in the dynamics of the closed-loop errors: 
\begin{align}\label{eq:err_sys}
\begin{array}{rl}
\hspace{-.2cm}\overline{e}_{k+1}\hspace{-.1cm}&=(A\hspace{-.1cm}-\hspace{-.1cm}LC)^\oplus \overline{e}_k-(A\hspace{-.1cm}-\hspace{-.1cm}LC)^\ominus \underline{e}_k+\overline{\delta}^{\phi}_k-Ww_k\\
&+(LV)^\ominus \overline{v}\hspace{-.1cm}-\hspace{-.1cm}(LV)^\oplus \underline{v}+LVv_k+W^{\oplus}\overline{w}-\hspace{-.1cm}W^{\ominus}\underline{w},\\
\hspace{-.2cm}\underline{e}_{k+1}\hspace{-.1cm}&=(A\hspace{-.1cm}-\hspace{-.1cm}LC)^\oplus \underline{e}_k-(A\hspace{-.1cm}-\hspace{-.1cm}LC)^\ominus \overline{e}_k+\underline{\delta}^{\phi}_k+Ww_k\\
&+(LV)^\ominus \underline{v}\hspace{-.1cm}-\hspace{-.1cm}(LV)^\oplus \overline{v}+LVv_k+W^{\oplus}\underline{w}-\hspace{-.1cm}W^{\ominus}\overline{w},
 \end{array}
\end{align}   
where $\overline{\delta}^{\phi}_k,\underline{\delta}^{\phi}_k$ are given under \eqref{eq:eqiv_sys}. We are ready to state our first result, which shows that the augmented system of closed-loop framers and errors has a comparison system that is independent of $x_k$, and can be stabilized by separately designing the observer gain $L$, and the dynamic control gain matrices $A_c,B_c,C_c,D_c,K^u_{\nu},K^x_{\nu}$. 
\begin{lem}[Separation Principle]\label{lem:separation}
The augmentation of \eqref{eq:eqiv_sys} and \eqref{eq:err_sys} can be stabilized by first designing a stabilizing observer gain $L$ for the framer error system \eqref{eq:farmer_error}, and then synthesizing the dynamic control gains $A_c,B_c,C_c,D_c,K^u_{\nu},K^x_{\nu}$, given $L$, to stabilize the augmented system.
\end{lem}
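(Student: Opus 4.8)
The plan is to expose a block-triangular (cascade) structure in the augmented closed-loop system and then invoke the standard fact that a series interconnection of two input-to-state stable (ISS) positive subsystems, with no feedback path, is itself ISS. First I would eliminate the unmeasured plant state: by the change of variables \eqref{eq:frame_error}, every occurrence of $x_k$ in \eqref{eq:eqiv_sys} is replaced by $\overline{x}_k-\overline{e}_k$ (equivalently $\underline{x}_k+\underline{e}_k$), so that the augmented closed-loop system --- obtained by stacking \eqref{eq:eqiv_sys}, the controller-state equation in \eqref{eq:control}, and \eqref{eq:err_sys} --- has state $\zeta_k\triangleq[\overline{x}_k^\top\ \underline{x}_k^\top\ (x^c_k)^\top\ \overline{e}_k^\top\ \underline{e}_k^\top]^\top$, naturally partitioned into a \emph{framer/controller} block $\xi_k\triangleq[\overline{x}_k^\top\ \underline{x}_k^\top\ (x^c_k)^\top]^\top$ and a \emph{closed-loop error} block $\eta_k\triangleq[\overline{e}_k^\top\ \underline{e}_k^\top]^\top$.

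Next I would show that the $\eta_k$-block is autonomous. By construction, the error recursion \eqref{eq:err_sys} contains neither the framer/controller states $\xi_k$ nor any of the controller gains $A_c,B_c,C_c,D_c,K^u_\nu,K^x_\nu$; the only terms in \eqref{eq:err_sys} still referencing $x_k$ are the tight-decomposition mismatches $\overline{\delta}^\rho_k,\underline{\delta}^\rho_k$ for $\rho\in\{\phi,\psi\}$, and these are innocuous because correctness (Definition \ref{defn:framers}) together with \eqref{eq:mm_bounding} forces $\mathbf{0}_n\le\overline{\delta}^\rho_k,\underline{\delta}^\rho_k$ and $\overline{\delta}^\rho_k+\underline{\delta}^\rho_k\le F_\rho(\overline{e}_k+\underline{e}_k)$ \emph{uniformly} in $x_k$. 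Indeed, adding the two lines of \eqref{eq:err_sys} returns exactly the framer-error recursion \eqref{eq:farmer_error} for $\varepsilon_k=\overline{e}_k+\underline{e}_k$; and since $\mathbf{0}_n\le\overline{e}_k\le\varepsilon_k$ and $\mathbf{0}_n\le\underline{e}_k\le\varepsilon_k$ componentwise, ISS of $\varepsilon_k$ implies ISS of $\eta_k$ (norms are monotone on the nonnegative orthant). Hence, by Proposition \ref{prop:frame}, taking $L$ to be a solution of the SDP of \cite{9790824} referenced therein makes \eqref{eq:farmer_error}, and thus the whole $\eta_k$-block, ISS --- this is exactly the first design stage.

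With $L$ now fixed, I would read off from \eqref{eq:eqiv_sys} (after the above substitution) that the $\xi_k$-block has the form $\xi_{k+1}=\Phi_{\mathrm{cl}}(\xi_k)+(\text{coupling from }\eta_k)+(\text{bounded noise terms})$, where $\Phi_{\mathrm{cl}}$ collects the now-fixed matrices $(A-LC)^{\oplus},(A-LC)^{\ominus},LC$, the controller gains, and the internal JSS terms $\phi_d(\overline{x}_k,\underline{x}_k),\phi_d(\underline{x}_k,\overline{x}_k)$, whereas every coupling term --- the $-LC\overline{e}_k$ produced by the substitution, the terms carrying $\overline{\delta}^\psi_k,\underline{\delta}^\psi_k$, and $BK^u_\nu\delta^\phi_k$ --- is dominated componentwise by a fixed linear function of $\eta_k$ (using $\delta^\phi_k\le F_\phi(\overline{e}_k+\underline{e}_k)$ and $\overline{\delta}^\psi_k,\underline{\delta}^\psi_k\le F_\psi(\overline{e}_k+\underline{e}_k)$). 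Thus the $\xi_k$-block is an $x_k$-free system whose free response depends only on the controller gains and on the fixed $L$, so the second design stage is to pick those gains --- as the tractable matrix inequalities to be derived next will do --- so that this block is ISS with respect to $(\eta_k,\text{noise})$. The proof then closes with the cascade argument: $\zeta_k$ is the series interconnection $\eta_k\to\xi_k$ with no feedback, so ISS of the upstream block (stage one) and ISS of the downstream block with respect to $(\eta_k,\text{noise})$ (stage two) give ISS of the augmentation of \eqref{eq:eqiv_sys} and \eqref{eq:err_sys}; equivalently, the associated nonnegative comparison system has a block upper-triangular transition matrix, which is Schur if and only if both diagonal blocks are, and the two stages make each diagonal block Schur in turn.

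I expect the main obstacle to be the rigorous justification of the decoupling: one must verify that the residual $x_k$-dependence of \eqref{eq:err_sys} through $\overline{\delta}^\rho_k,\underline{\delta}^\rho_k$ is genuinely harmless --- that the sandwich bounds hold uniformly in the unknown $x_k$, so that $\varepsilon_k$ really obeys the closed recursion \eqref{eq:farmer_error} --- and, symmetrically, that every error-to-framer coupling term in \eqref{eq:eqiv_sys} is componentwise bounded by a linear function of $\eta_k$, so that ISS genuinely propagates along the cascade. Both rest on the correctness property, i.e., the non-negativity of $\overline{e}_k,\underline{e}_k$ and of the $\delta$'s, which is precisely what keeps all the over-bounding inside the monotone/positive-systems framework the paper relies on.
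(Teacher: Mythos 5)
Your proposal is correct and follows essentially the same route as the paper's proof: both bound the decomposition mismatches $\overline{\delta}^{\rho}_k,\underline{\delta}^{\rho}_k$ by $F_{\rho}(\overline{e}_k+\underline{e}_k)$ using correctness and \eqref{eq:mm_bounding}, observe that the resulting error block is decoupled from the framer/controller block and dominated by \eqref{eq:farmer_error} via $\mathbf{0}_n\leq\overline{e}_k,\underline{e}_k\leq\varepsilon_k$, and conclude from the triangular structure that $L$ and the controller gains can be designed in sequence. The only cosmetic difference is that you invoke the ISS-cascade lemma where the paper writes out the explicit comparison system \eqref{eq:compare_sys} with its block lower-triangular matrix $\tilde{A}$ (which it needs anyway for Theorem~\ref{thm:control_design}) and argues on its spectrum.
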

\begin{rem}
It is worth noting that Lemma \ref{lem:separation} may not be considered as a ``full" separation
principle in the sense that the interval observer gain $L$
remains an input for the synthesis of the dynamic control
gains, thus possibly influencing the resulting control
performance.
\end{rem}
\begin{proof}
Augmenting \eqref{eq:control}, \eqref{eq:eqiv_sys} and \eqref{eq:err_sys}, and applying the following inequalities for $\rho \in \{\phi,\psi\}$: 
\begin{align}\label{eq:F}
\begin{array}{rl}
\overline \delta^{\rho}_k&=\hspace{-.1cm}\rho_d(\overline{x}_k,\underline{x}_k)-\rho(x_k)\leq \rho_d(\overline{x}_k,\underline{x}_k)-\hspace{-.1cm}\rho_d(\underline{x}_k,\overline{x}_k)\\
&\leq F_{\rho}(\overline{x}_k-\underline{x}_k)=F_{\rho}(\overline{e}_k+\underline{e}_k),\\
\underline \delta^{\rho}_k&=\hspace{-.1cm}\rho(x_k)-\rho_d(\underline{x}_k,\overline{x}_k)\leq \rho_d(\overline{x}_k,\underline{x}_k)-\hspace{-.1cm}\rho_d(\underline{x}_k,\overline{x}_k)\\
&\leq F_{\rho}(\overline{x}_k-\underline{x}_k)=F_{\rho}(\overline{e}_k+\underline{e}_k),
\end{array}
\end{align}
according to Proposition \ref{prop:tight_decomp}, combined with the fact that $\rho_d(\underline{x}_k,\overline{x}_k)\leq \rho(x_k)\leq \rho_d(\overline{x}_k,\underline{x}_k)$ (by properties of decomposition functions),
returns a comparison augmented system:
\begin{align}\label{eq:compare_sys}
z_{k+1}\leq \tilde{A}z_k+\lambda(z_k)+\Lambda \eta_k, 
\end{align}
where $\eta_k \triangleq [\overline{w}^\top \underline{w}^\top w^\top_k \overline{v}^\top \underline{v}^\top v_k^\top]^\top$, $z\triangleq [\overline{e}^\top \underline{e}^\top x^{c\top}_k \overline{x}^\top \underline{x}^\top]^\top$,

\vspace{-.3cm}
{\small
\begin{align*}
 \lambda(z) &\triangleq [\mathbf{0}^\top_n \hspace{.1cm} \mathbf{0}^\top_n \hspace{.1cm} \mathbf{0}^\top_n \hspace{.1cm}  \phi^\top_d(\overline{x},\underline{x}) \hspace{.1cm} \phi^\top_d(\underline{x},\overline{x})]^\top,\\
 \Lambda &\triangleq \begin{bmatrix} W^\oplus & -W^\ominus &-W & (LV)^\ominus & -(LV)^\oplus & LV \\  -W^\ominus & W^\oplus & W & -(LV)^\oplus & (LV)^\ominus & LV \\
 \mathbf{0}_{n,n_w} & \mathbf{0}_{n,n_w} &\mathbf{0}_{n,n_w} & \mathbf{0}_{n,n} & \mathbf{0}_{n,n} & \mathbf{0}_{n,n} \\
 W^\oplus & -W^\ominus &\mathbf{0}_{n,n_w} & (LV)^\ominus & -(LV)^\oplus & LV \\  -W^\ominus & W^\oplus & \mathbf{0}_{n,n_w} & -(LV)^\oplus & (LV)^\ominus & LV
 \end{bmatrix},\\
\tilde{A}&=\begin{bmatrix} [\tilde{A}_{11}] & [\tilde{A}_{12}] & \mathbf{0}_{n,n} & \mathbf{0}_{n,n} & \mathbf{0}_{n,n} \\  [\tilde{A}_{21}] & [\tilde{A}_{22}] &\mathbf{0}_{n,n} & \mathbf{0}_{n,n} & \mathbf{0}_{n,n}\\
\mathbf{0}_{n,n} &  \mathbf{0}_{n,n} & [\tilde{A}_{33}] & [\tilde{A}_{34}] & [\tilde{A}_{35}] \\ 
[\tilde{A}_{41}] & [\tilde{A}_{42}] & [\tilde{A}_{43}] &[\tilde{A}_{44}] & [\tilde{A}_{45}] \\
[\tilde{A}_{51}] & [\tilde{A}_{52}] & [\tilde{A}_{53}] &[\tilde{A}_{54}] & [\tilde{A}_{55}] 
\end{bmatrix}.
\end{align*}
}

\noindent Moreover, the block matrices inside $\tilde{A}$ are as follows:

\vspace{-.3cm}
{\small
\begin{align*}
[\tilde{A}_{11}] &\triangleq (A-LC)^\oplus+F_{\phi}+|L|F_{\psi}, \ \hspace{-.1cm}[\tilde{A}_{33}]\triangleq A_c,\\
[\tilde{A}_{12}] &\triangleq (A-LC)^\ominus+F_{\phi}+|L|F_{\psi}, \ \hspace{-.1cm}[\tilde{A}_{34}]\triangleq \overline{K}_b+K^x_{\nu}F_{\phi},\\
[\tilde{A}_{21}] &\triangleq (A-LC)^\ominus+F_{\phi}-|L|F_{\psi}, \\
[\tilde{A}_{22}] &\triangleq (A-LC)^\oplus+F_{\phi}-|L|F_{\psi}, \ \hspace{-.1cm}[\tilde{A}_{43}] \triangleq BC_c,\\
[\tilde{A}_{41}] &\triangleq -LC+|L|F_{\psi}+BK^u_{\nu}F_{\phi},\ \ [\tilde{A}_{53}] \triangleq BC_c,\\ 
[\tilde{A}_{44}] &\triangleq (A-LC)^\oplus+\hspace{-.1cm}B\overline{K}_d+LC,\ \hspace{-.1cm}[\tilde{A}_{35}]\triangleq -(\underline{K}_b\hspace{-.1cm}+\hspace{-.1cm}K^x_{\nu}F_{\phi}),\\
[\tilde{A}_{45}] &\triangleq -(A-LC)^\ominus\hspace{-.1cm}-\hspace{-.1cm}B\underline{K}_d, \ [\tilde{A}_{51}] \triangleq -|L|F_{\psi}+BK^u_{\nu}F_{\phi},\\
 [\tilde{A}_{52}] &\triangleq LC-|L|F_{\psi}+BK^u_{\nu}F_{\phi},\ [\tilde{A}_{42}] \triangleq |L|F_{\psi}\hspace{-.1cm}+\hspace{-.1cm}BK^u_{\nu}F_{\phi},\\
 [\tilde{A}_{54}] &\triangleq -(A\hspace{-.1cm}-\hspace{-.1cm}LC)^\ominus\hspace{-.1cm}+\hspace{-.1cm}B\overline{K}_d, \
 [\tilde{A}_{55}]\hspace{-.1cm} \triangleq\hspace{-.1cm} (A\hspace{-.1cm}-\hspace{-.1cm}LC)^\oplus\hspace{-.1cm}-\hspace{-.1cm}\hspace{-.1cm}B\underline{K}_d\hspace{-.1cm}+\hspace{-.1cm}LC. 
\end{align*}
}

\noindent The comparison system \eqref{eq:compare_sys} has a linear component with state matrix $\tilde{A}$, as well as the Lipschitz nonlinear component $\lambda(z)$. Note that $\lambda$ is a locally Lipschitz mapping since $\phi$ has bounded Jacobin by Assumption \ref{ass:mixed_monotonicity} and Proposition \ref{prop:tight_decomp}, and hence is locally Lipschitz. Consequently, $\phi_d$ is Lipschitz by construction (cf. \eqref{eq:JJ_decomp} in Proposition \ref{prop:tight_decomp}). Since, $\tilde{A}$ is a block lower triangular matrix, its set of eigenvalues is a superset of the set of eigenvalues of the matrix $\tilde{A}_u \triangleq \begin{bmatrix} \tilde{A}_{11} & \tilde{A}_{12} \\ \tilde{A}_{21} & \tilde{A}_{22} \end{bmatrix}$ which only depend on $L$. Moreover, the nonlinear vector function $\lambda(z)$ does not affect the stability of $\tilde{A}_u$ due to its zero upper elements. Hence, the observer and controller gains can be designed separately to stabilize system \eqref{eq:compare_sys}. Moreover, by applying the last inequality in \eqref{eq:F}, it is straightforward to see that \eqref{eq:err_sys} admits a linear comparison system with the state matrix $\tilde{A}_u$. Hence, any observer gain that stabilizes \eqref{eq:compare_sys} should stabilize \eqref{eq:err_sys} and vice versa. Finally, from the definitions of $\varepsilon_k, \overline{e}_k, \underline{e}_k$, we have $\mathbf{0}_n \leq \overline{e}_k \leq \varepsilon_k$ and $\mathbf{0}_n \leq \underline{e}_k \leq \varepsilon_k$. So, any observer gain $L$ that stabilizes the open loop system \eqref{eq:farmer_error}, should stabilize \eqref{eq:err_sys}, and \eqref{eq:compare_sys}.      
\end{proof}
\subsection{Control Design}
Based on Lemma \eqref{lem:separation}, the design of the observer and control gains can be done separately, where $L$ can be designed first to stabilize \eqref{eq:farmer_error}. In our previous work~\cite{9790824} (also summarized in Proposition \ref{prop:frame}) we provided linear matrix inequalities, through which such an $L$ can be computed. 
Given this observer gain, in order to obtain tractable LMIs to synthesize stabilizing dynamic control gains $A_c,B_c,C_c,D_c,K^u_{\nu},K^x_{\nu}$, the main challenge to be addressed is to resolve the bilinearities between decision variables when applying the existing results on the stability of Lipschitz nonlinear DT systems to \eqref{eq:compare_sys}. The following theorem tackles this challenge by applying similarity transformations and a change of variables.    
\begin{thm}\label{thm:control_design}
Suppose $L \in \mathbb{R}^{n \times l}$ is computed through Proposition \ref{prop:frame}, $\alpha > 0$ is a chosen real constant (e.g., is a desired decay rate for the error system \eqref{eq:err_sys}) picked by the designer, and $\epsilon=\frac{1}{\alpha \gamma^2}-1$, where $\gamma=\|F_{\phi}\|_{\infty}$ is the Lipshitz constant of the decomposition function $\phi_d$. Let
$(\mu_*,\Gamma_*,Q_*,\Theta_*)$ be a solution to the following SDP:

\vspace{-.3cm}
{\small
\begin{align}\label{eq:stabilizing_K}
&\min\limits_{\{\mu > 0, \Gamma \succ \mathbf{0}_{5n, 5n}, Q \succ \mathbf{0}_{5n,5n}, \Theta\}}{\mu}\\
\nonumber &\text{s.t.} \ \begin{bmatrix} \Gamma - Q & Q & Q\hat{A}^\top +\Theta \hat{B}^\top & I_{5n} \\
* & -\alpha I_{5n} & \mathbf{0}_{5n, 5n} & \mathbf{0}_{5n,5n} \\
* & * & -\frac{1}{2}Q & Q \\
* & * & * & Q-2\epsilon \Gamma \end{bmatrix} \prec \mathbf{0}_{20n,20n},\\
\nonumber & \quad \ \ \begin{bmatrix}
-\mu I_{\tilde{n}} & \Lambda^\top  & \Lambda^\top  \\
* & -\frac{1}{2}Q & \mathbf{0}_{5n,5n} \\
* & * & -\Gamma
\end{bmatrix} \prec \mathbf{0}_{\hat{n}, \hat{n}}, \ \begin{bmatrix} I_{4n} & Q \\
Q & \Gamma \end{bmatrix} \succeq \mathbf{0}_{10n, 10n}, 
\end{align}
}

\noindent where $\tilde{n}\triangleq 3(n_w+n_v)$ is the dimension of the augmented noise vector $\eta_k \in \mathbb{R}^{\tilde{n}}$ and $\hat{n} \triangleq \tilde{n}+10n$. Moreover, $\hat{A} \in \mathbb{R}^{5n \times 5n}$ and $\hat{B} \in \mathbb{R}^{5n \times 15n}$ are defined as follows:

\vspace{-.3cm}
{\small
\begin{align*}
[\hat{A}_{ij}]&=[\tilde{A}_{ij}],  i \in \{1,2\},  j \in \{1,2,3,4,5\},\\
[\hat{A}_{3j}]&=\mathbf{0}_{n,n}, \quad \quad \quad \quad \ \  j \in \{1,2,3,4,5\},\\
[\hat{A}_{41}]&= LC+|L|F_{\psi}, \quad \quad [\hat{A}_{42}]= |L|F_{\psi}, \quad [\hat{A}_{43}]=\mathbf{0}_{n,n},\\
[\hat{A}_{44}]&=(A-LC)^\oplus+LC, [\hat{A}_{45}] = -(A-LC)^\ominus, \\
[\hat{A}_{51}]&=-|L|F_{\psi}, \quad \quad \quad [\hat{A}_{52}]= LC-|L|F_{\psi}, [\hat{A}_{53}]=\mathbf{0}_{n,n},\\
[\hat{A}_{54}]&=-(A-LC)^\ominus, \quad \quad [\hat{A}_{55}]=(A-LC)^\oplus +LC,
\end{align*}
\begin{align*}
[\hat{B}_{ij}]\hspace{-.1cm}=\hspace{-.1cm}
\begin{cases} \mathbf{0}_{n,4n+l+3m},  \quad \quad \quad \quad i \in \{1,2\},  j \in \{1,2,3,4,5\},\\
 \mathbf{0}_{n,4n+l+3m},  \quad \quad \quad \quad   i \in \{3,4\},  j \in \{1,2\},\\
 \begin{bmatrix}I_n & \mathbf{0}_{n,3n+l+3m}\end{bmatrix},  \quad \  i=3, j=3,\\
 \begin{bmatrix}\mathbf{0}_{n,n} & I_n & \mathbf{0}_{n,n+2m+l} & I_n & \mathbf{0}_{n,m}\end{bmatrix}, \quad  i =3, j=4,\\
 \begin{bmatrix}\mathbf{0}_{n,2n} & -I_n & \mathbf{0}_{n,2m+l} & -I_n & \mathbf{0}_{n,m}\end{bmatrix}, \  i =3, j=5,\\
 \begin{bmatrix} \mathbf{0}_{n,4n+2m+l} & B\end{bmatrix}, \quad \ \  i \in \{4,5\}, j\in \{1,2\},\\
  \begin{bmatrix} \mathbf{0}_{n,3n} & B & \mathbf{0}_{n,n+3m}\end{bmatrix},
   \ i \in \{4,5\}, j=3,\\
 \begin{bmatrix} \mathbf{0}_{n,3n+l} & B & \mathbf{0}_{n,n+2m}\end{bmatrix}\hspace{-.1cm},  i \in \{4,5\}, j=4,\\
  \begin{bmatrix} \mathbf{0}_{n,3n+l+m} & -B & \mathbf{0}_{n,n+m}\end{bmatrix}, \quad \ i \in \{4,5\}, j=5.
\end{cases}
\end{align*}
}

Then, the closed-loop system \eqref{eq:compare_sys} is ISS and satisfies 
\begin{align}\label{eq:noise-attenuation}
\|z_k\|^2_2 \leq \mu_*\|\tilde{w}_k\|^2_2, \ \forall k,
\end{align}
with the dynamic control gains $A^*_c,B^*_c=[-\underline{K}^*_b \ \overline{K}^*_b], C^*_c, D^*_c=[-\underline{K}^*_d \ \overline{K}^*_d], {K}^{x*}_{\nu}$ and ${K}^{u*}_{\nu}$ that are obtained as follows: $\tdiag_5(\tilde{K}^*)
=(Q_*^{-1}\Theta_*)^\top$, where 
{\small
\begin{align}\label{eq:control_design}
\hspace{-.4cm}\tilde{K}^*\hspace{-.1cm}\triangleq \hspace{-.1cm}\begin{bmatrix}A^{*\top}_c & \hspace{-.07cm} \overline{K}^{*\top}_b &\hspace{-.07cm} \underline{K}^{*\top}_b &\hspace{-.07cm} C^{*\top}_c & \hspace{-.07cm}\overline{K}^{*\top}_d &\hspace{-.07cm}  \underline{K}^{*\top}_d &\hspace{-.07cm} (\tilde K^{x*}_{\nu})^\top &\hspace{-.07cm} (\tilde K^{u*}_{\nu})^\top\hspace{-.07cm}\end{bmatrix}^\top \hspace{-.2cm}.
\end{align}
}

\noindent Moreover, $\tilde K^{x*}_{\nu} \triangleq {K}^{x*}_{\nu}F_{\phi}$, and $\tilde K^{u*}_{\nu} \triangleq {K}^{u*}_{\nu}F_{\phi}$.
\end{thm}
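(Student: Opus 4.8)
The plan is to establish \eqref{eq:noise-attenuation} and ISS of \eqref{eq:compare_sys} by first reducing the nonlinear comparison system to a Lipschitz discrete-time system in standard form, then invoking a quadratic ($\mathcal{H}_{\infty}$-type) storage function $V(z)=z^\top P z$ and translating the resulting matrix inequality into the LMIs of \eqref{eq:stabilizing_K} via a congruence transformation and the change of variables $\Theta = Q\,\tdiag_5(\tilde K)^\top$. Concretely, I would begin by writing the augmented closed-loop dynamics \eqref{eq:compare_sys} in the form $z_{k+1} \le (\hat A + \hat B \tdiag_5(\tilde K))\, z_k + \lambda(z_k) + \Lambda \eta_k$, checking that the block structure of $\hat A$ and $\hat B$ indeed reproduces the blocks $[\tilde A_{ij}]$ once the controller gains are substituted — this is where the separation into $\hat A$ (known, depends only on $L$) and $\hat B\tdiag_5(\tilde K)$ (linear in the design variables) is exploited. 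Because correctness of the framer gives $z_k \ge \mathbf 0$, the component-wise inequality $z_{k+1}\le(\cdot)$ can be propagated to an honest majorant system, so bounding the majorant bounds $z_k$.

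Next I would set up the dissipation inequality. Using $\gamma = \|F_\phi\|_\infty$ as the Lipschitz constant of $\phi_d$ (hence of $\lambda$), the nonlinearity satisfies a quadratic bound $\lambda(z)^\top \lambda(z) \le \gamma^2 z^\top z$, which I would incorporate via an S-procedure multiplier (this is the role of $\epsilon = \tfrac{1}{\alpha\gamma^2}-1$ and the $\alpha$-scaling). Requiring $V(z_{k+1}) - V(z_k) \le -\|z_k\|_2^2 + \mu\,\|\eta_k\|_2^2$ along trajectories, with the Lipschitz bound substituted, yields a single matrix inequality in $P$, $P^{-1}$, the gain product, and $\mu$; summing over $k$ then gives \eqref{eq:noise-attenuation} and, since $V$ is positive definite and the dissipation is strict, ISS. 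I would write $Q = P^{-1}$ (or a scaled version), $\Gamma$ a companion slack matrix accounting for the $P^{-1}$ appearing from the $\hat B\tdiag_5(\tilde K)$ term, apply Schur complements to pull out the quadratic-in-gains terms, and perform the congruence by $\tdiag(Q,\dots)$. The substitution $\Theta = Q\,\tdiag_5(\tilde K)^\top$ linearizes the bilinear term $Q\hat A^\top + Q\tdiag_5(\tilde K)^\top \hat B^\top = Q\hat A^\top + \Theta\hat B^\top$, exactly matching the $(1,3)$ block of the first LMI; the $-\alpha I$, $-\tfrac12 Q$, and $Q - 2\epsilon\Gamma$ blocks come from Schur-completing the Lipschitz multiplier and the slack coupling, the second LMI is the Schur complement form of the $\mu\|\eta_k\|^2$ bound on the $\Lambda\eta_k$ channel, and the third LMI $\begin{bmatrix} I_{4n} & Q \\ Q & \Gamma\end{bmatrix}\succeq \mathbf 0$ enforces $\Gamma \succeq Q I_{4n}^{-1} Q$, i.e. the consistency between $\Gamma$ and $Q$ needed so the congruence is lossless. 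Finally, recovering the gains: since $\tdiag_5(\tilde K)^\top = Q_*^{-1}\Theta_*$ and $\tilde K$ stacks $A_c, \overline K_b, \underline K_b, C_c, \overline K_d, \underline K_d, \tilde K^x_\nu, \tilde K^u_\nu$ by \eqref{eq:control_design}, read off each block and undo $\tilde K^x_\nu = K^x_\nu F_\phi$, $\tilde K^u_\nu = K^u_\nu F_\phi$ using invertibility of $F_\phi$.

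\textbf{The main obstacle} will be the algebraic bookkeeping that makes the bilinear term collapse cleanly into $Q\hat A^\top + \Theta\hat B^\top$ with a \emph{single} common slack $Q$ across all five diagonal blocks — this forces the controller gain to appear in the block-diagonal form $\tdiag_5(\tilde K)$ and dictates the precise padding pattern in $\hat B$ (the long list of $\mathbf 0_{n,\cdot}$/$I_n$/$B$ blocks). I would need to verify that the $\hat B$ pattern is exactly the one under which $\hat B\,\tdiag_5(\tilde K)$ reproduces every $BC_c$, $B\overline K_d$, $\overline K_b$, $BK^u_\nu F_\phi$, etc. appearing in $\tilde A_{ij}$, and that no cross term is lost. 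A secondary subtlety is justifying that stabilizing the comparison (majorant) system \eqref{eq:compare_sys} stabilizes the true error/framer dynamics — but that is precisely Lemma \ref{lem:separation} together with the positivity $\mathbf 0 \le z_k$, so I would cite it rather than reprove it. The remaining steps — Schur complements, the S-procedure, summation telescoping for the $\ell_2$ bound — are routine once the structural matching is in place.
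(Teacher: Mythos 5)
Your proposal matches the paper's proof in all essential respects: the paper likewise writes $\tilde{A}=\hat{A}+\hat{B}\,\tdiag_5(\tilde{K})$, applies the congruence transformations $\mathrm{diag}(P^{-1},I,P^{-1},P^{-1})$ and $\mathrm{diag}(I_{\tilde{n}},P^{-1},P^{-1})$ with $Q=P^{-1}$ and $\Theta=Q\,\tdiag_5(\tilde{K})^\top$, and introduces $\Gamma\succeq Q^2$ via a Schur complement to obtain \eqref{eq:stabilizing_K}. The only (immaterial) difference is that the paper imports the underlying dissipation LMIs for Lipschitz DT systems directly from \cite[Lemma 3]{abbaszadeh2009lmi} rather than re-deriving them via a storage function and S-procedure as you propose.
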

\begin{proof}
First, note that the augmented system in \eqref{eq:compare_sys} has a Lipschitz nonlinear component $\lambda$. This is because the mapping $\phi$ has bounded Jacobians by Assumption \ref{ass:mixed_monotonicity} and \eqref{eq:JSS_Jacobian}, and so is Lipschitz. Hence, $\phi_d$ is also Lipschitz by construction (cf. \eqref{eq:JJ_decomp} and \eqref{eq:Dj} in Proposition \ref{prop:tight_decomp}), with the Lipschitz constant $\gamma=\|F_{\phi}\|_{\infty}$. Then, by \cite[Lemma 3]{abbaszadeh2009lmi}, \eqref{eq:compare_sys} is ISS and satisfies \eqref{eq:noise-attenuation} if the following LMIs hold (which are the special cases of \cite[(15) and (16)]{abbaszadeh2009lmi} with $H$ being identity, while $N$ and $M$ are set to be zero matrices):

\vspace{-.3cm}
{\small
\begin{align}\label{eq:LMI_1}
&\begin{bmatrix}
I_{5n}-P & I_{5n} & \tilde{A}^\top P & \mathbf{0}_{5n,5n} \\
* & -\alpha I_{5n} & \mathbf{0}_{5n,5n} & \mathbf{0}_{5n,5n} \\
* & * & -\frac{1}{2}P & P \\
* & * & * & P-2\epsilon I_{5n}
\end{bmatrix} \prec \mathbf{0}_{20n,20n}, \\
\label{eq:LMI_2}&\begin{bmatrix}
-\mu I_{\tilde{n}} & \Lambda^\top P & \Lambda^\top P \\
* & -\frac{1}{2}P & \mathbf{0}_{5n,5n} \\
* & * & -I_{5n}
\end{bmatrix} \prec \mathbf{0}_{\hat{n},\hat{n}}.
\end{align}
}

\noindent \eqref{eq:LMI_1} and \eqref{eq:LMI_2} cannot be directly used to design the control gains due to the existence of the terms $P\tilde{A}$ which contains bilinearities in the form of pre- and post- multiplication of matrix $B$ with decision variables, i.e., the control gains. To overcome this difficulty, we apply two similarity transformations by pre- and post-multiplying the LMIs in \eqref{eq:LMI_1} and \eqref{eq:LMI_2} by $\text{diag}(P^{-1},I_{4n},P^{-1},P^{-1})$ and $\text{diag}(I_{\tilde{n}},P^{-1},P^{-1})$, respectively. This, combined with applying the change of variables $Q=P^{-1}$, and defining the new matrix variable $\Gamma$ that  satisfies $\Gamma \succeq P^{-2}=Q^2 \Leftrightarrow \begin{bmatrix} I_{4n} & Q \\
Q & \Gamma \end{bmatrix} \succeq \mathbf{0}_{8n \times 8n}$ by Schur complements, returns the results.  
\end{proof}
\section{Illustrative Example}\label{sec:example}
Consider a slightly modified version of the system in \cite[Example 2]{abbaszadeh2009lmi} in the form of \eqref{eq:system}--\eqref{eq:JSS_decom} with:

\vspace{-.3cm}
{\small
\begin{align*}
\label{eq:exampletwo}
\begin{array}{rl}
A&=\begin{bmatrix}
   0.5000 & -0.5975 & 0.3735 & 0.0457 & 0.3575\\
   0.2500 & 0.3000 & 0.4017  & 0.1114 & 0.0227\\
   0.4880 & 01384  & 0.2500  & 0.7500 & 0.7500\\
   0.3838 & 0.0974 & 0.5000  & 0.2500 & 0.5000\\
   0.0347 & 0.1865 & -0.2500 & 0.5000 & 0.2500
   \end{bmatrix},\\
   \phi(x)&=\begin{bmatrix}
   0.1(\sin(x_3)-x_3)\\
   0.2(\sin(x_4)-x_4)\\
   0.3(\sin(x_1)-x_1)\\
   0\\
   0.1(\sin(x_2)-x_2)
   \end{bmatrix},
   B=\begin{bmatrix}
   0.7 & 0.8 & 0\\
   0.4 & 0.9 & 0.9\\
   0.9 & 0.9 & 0.2\\
   0.9 & 0.6 & 0.7\\
   0   & 0.5 & 0.3
   \end{bmatrix},\\
   D&=\begin{bmatrix}
   0.2 & 0.1 & 0\\
   0.2 & 0.1 & 0
   \end{bmatrix},
   C=\begin{bmatrix}
   0.5 & 0.2 & 0 & 0 & 0.3\\
   0   & 0.2 & 0.1 & 0.3 & 0
   \end{bmatrix}, V=I_2\\
   \psi(x)&=[0.1(\cos(x_1)-x_1) \ 0.2(\cos(x_2)+x_2)]^\top, W=I_5,
\end{array}
\end{align*}
}

\noindent where 
$\mathcal{W}=[-.1 \ .1]^5,\mathcal{V}=[-.1 \ .1]^2$ and $\mathcal{X}_0 = [-6 \ 6]^5$. The nonlinear system is unstable. Moreover, matrix $A$ is unstable, which means the linear part of the system is itself unstable. The first plot in Figure \ref{fig:states} (i.e., in the upper left corner) shows the trajectory of all the states for the unstable open-loop system, where some of the state values grow unbounded as expected. All the computed system parameters and observer gains are given in Figure \ref{fig:wideMatrix}. 
\begin{figure*}[!t]  
\centering
{\scriptsize
\begin{align*}
\begin{array}{rl}
A_c\hspace{-.1cm}&=\hspace{-.1cm}\begin{bmatrix}
   0.0400 & -0.0864 & 0.2624 & 0.0046 & 0.2464\\
   0.1400 & 0.2000 & 0.3006  & 0.0003 & 0.0116\\
   0.3779 & 0.0273  & 0.1499  & 0.6499 & 0.6499\\
   0.2727 & 0.9863 & 0.4999  & 0.1499 & 0.4198\\
   0.9236 & 0.0754 & -0.1499 & 0.4999 & 0.1497
   \end{bmatrix}, \overline{K}_b \hspace{-.1cm}=\hspace{-.1cm}-\underline{K}_b\hspace{-.1cm}=\hspace{-.1cm}\begin{bmatrix} 
   -0.6575 & 0.5494 & 0 & 0 & 0 \\ 
-0.2997 & -0.1977 & 0 & 0 & 0.1 \\ 
0.2196 & -0.5392 & 0.2 & 0 & 0\\
0 & 0 & 0 & 0.1 & 0 \\ 
0 & 0 & 0 & 0 & 0.15
\end{bmatrix}, B_c\hspace{-.1cm}=\hspace{-.1cm}\begin{bmatrix}
   -0.5 & 0.6 & 0\\
   0.213 & -0.7001 & 0.72\\
   -0.7001 & 0.8502 & -0.1008\\
   0.6481 & -0.4 & 0.5719\\
   0   & 0.3002 & -0.1
   \end{bmatrix},\\
   {C}_{c} \hspace{-.1cm}&=\hspace{-.1cm}\begin{bmatrix} 0.2232 & -0.2191 & 0.1 & 0 & -0.01\\ 
0.0993 & -0.0933 & 0 & 0 & -0.04\\ 
-0.0002 & 0.21 & 0 & -0.1002 & 0\end{bmatrix},
\overline{K}_{d}\hspace{-.1cm}=\hspace{-.1cm}-\underline{K}_{d} \hspace{-.1cm}=\hspace{-.2cm}\begin{bmatrix} 0.1121 & -0.108 & 0 & 0 & -0.9099\\ 
0.0882 & -0.0812 & 0 & 0 & -0.03\\ 
-0.0001 & 0.1 & -0.1 & 0.1002 & 0\end{bmatrix}, F_{\phi}=\begin{bmatrix} \varepsilon_0 & 0 & 0.2 & 0 & 0  \\ 0 & \varepsilon_0 & 0.4 & 0 & 0\\0.6 &  0 & \varepsilon_0 & 0 & 0 \\ 0 & 0 & 0 & \varepsilon_0 & 0\\ 0 & 0.2 & 0 & 0 & \varepsilon_0  \end{bmatrix}, \\ {K}^x_{\nu}\hspace{-.1cm}&=\hspace{-.1cm}\begin{bmatrix} 
   -0.5464 & 0.4383 & 0 & 0 & 0 \\ 
-0.1886 & -0.0866 & 0 & 0 & 0.1 \\ 
0.1085 & -0.4281 & 0 & 0 & 0\\
0 & 0 & 0.5392 & 0 & 0 \\ 
0 & 0 & 0 & -0.3403 & 0
\end{bmatrix}\hspace{-.1cm}, {K}^u_{\nu} \hspace{-.1cm}=\hspace{-.1cm}\begin{bmatrix} 0.0010 & -0.0979 & 0 & 0 & -0.8088\\ 
0.0771 & -0.0701 & 0 & 0 & -0.0200\\ 
-0.1001 & 0.2000 & -0.1 & 0 & 0\end{bmatrix}\hspace{-.1cm}, F_{\psi}\hspace{-.1cm}=\hspace{-.1cm}\begin{bmatrix} 0.2 & 0 \\ 0 & 0.4 \end{bmatrix}\hspace{-.1cm}, L\hspace{-.1cm}=\hspace{-.1cm} \begin{bmatrix}        0.1   &  0   &   0 & 0 & 0\\     -20.0999 &  1    & 0 & 0 & 0      \end{bmatrix}\hspace{-.1cm}.
\end{array}
\end{align*}
}
\vspace{-.2cm}
\caption{{\small Parameters and computed gains for the example system in Section \ref{sec:example}.}}
\label{fig:wideMatrix}
\end{figure*}
Using Proposition \ref{prop:tight_decomp} we computed $F_{\phi}$ and $F_{\psi},\gamma=\|F_{\phi}\|_{\infty}=0.6010$, where $\varepsilon_0=0.001$ is a small number added to the diagonal entries of the original $F_{\phi}$ to make it an invertible matrix. We set $\alpha=0.1$ which results in $\epsilon =\frac{1}{\alpha \gamma^2}-1=26.6854$. Moreover, based on the results of the separation principle in Lemma \ref{lem:separation}, we first obtained $L=$ by solving the SDP in \cite[(17) and (19)]{9790824}. Then, given $L$, solving the SDP in \eqref{eq:stabilizing_K} returned the control gains,
as well as the optimal noise attenuation level $\mu_*=1.2026$. 
The upper and lower closed-loop framers, i.e., $\overline{x}^{dy},\underline{x}^{dy}$ are shown in
the second to sixth plots in Figure \ref{fig:states}, where the corresponding framers after applying a static feedback control strategy \cite{khajenejad2024optimalcontrolstatic}, i.e., $\overline{x}^{dy},\underline{x}^{dy}$ are also plotted. As can be seen, both strategies are successful in stabilizing the trajectories; however, the dynamic control strategy significantly outperforms the static approach resulting in much tighter interval estimates.  

\begin{figure}[t!] 
\centering
{\includegraphics[width=0.48\columnwidth]{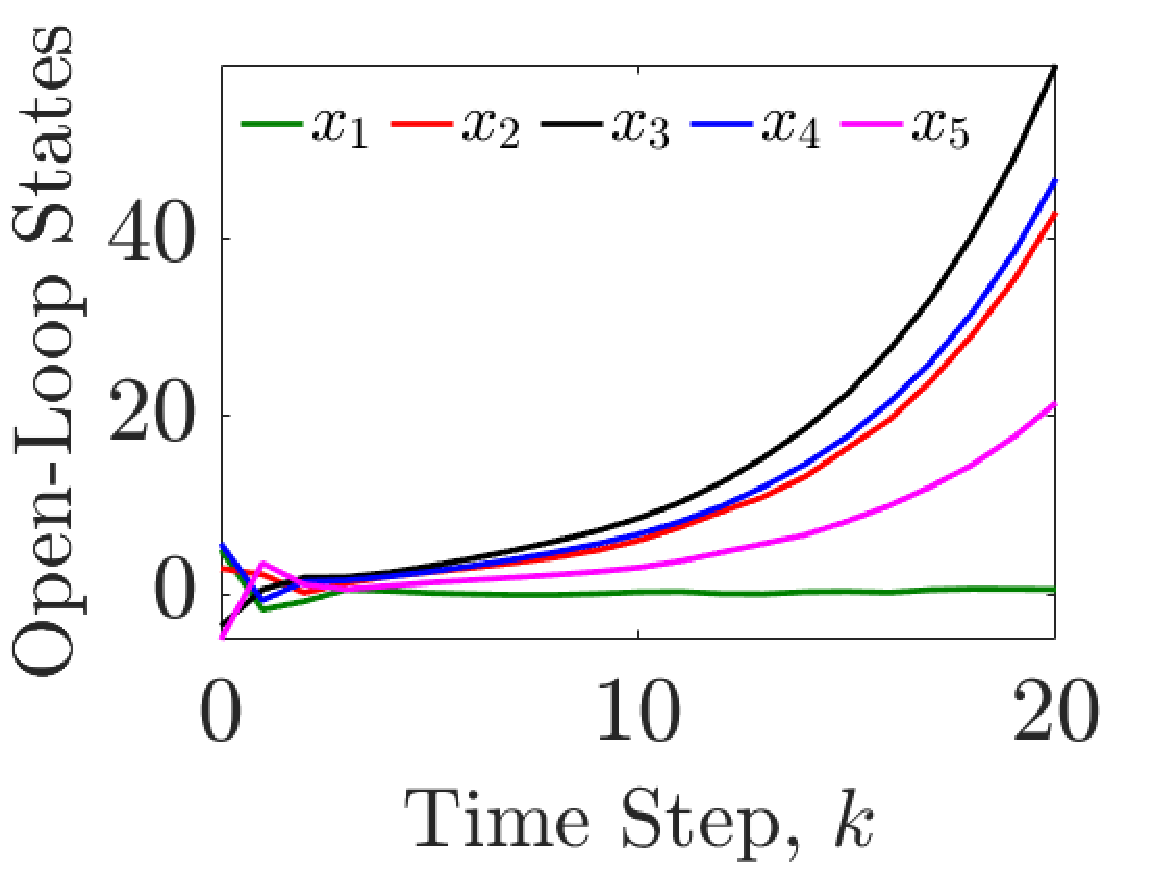}}
{\includegraphics[width=0.48\columnwidth]{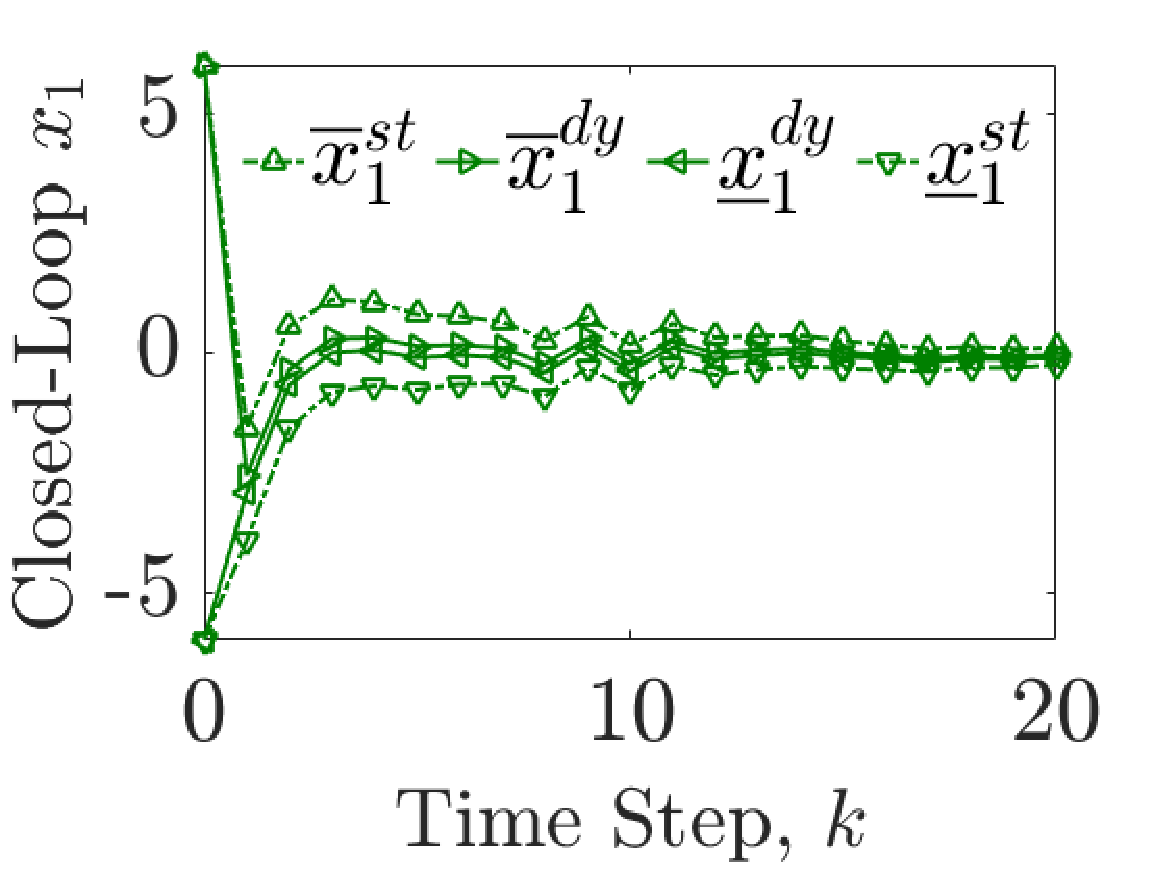}}
{\includegraphics[width=0.48\columnwidth]{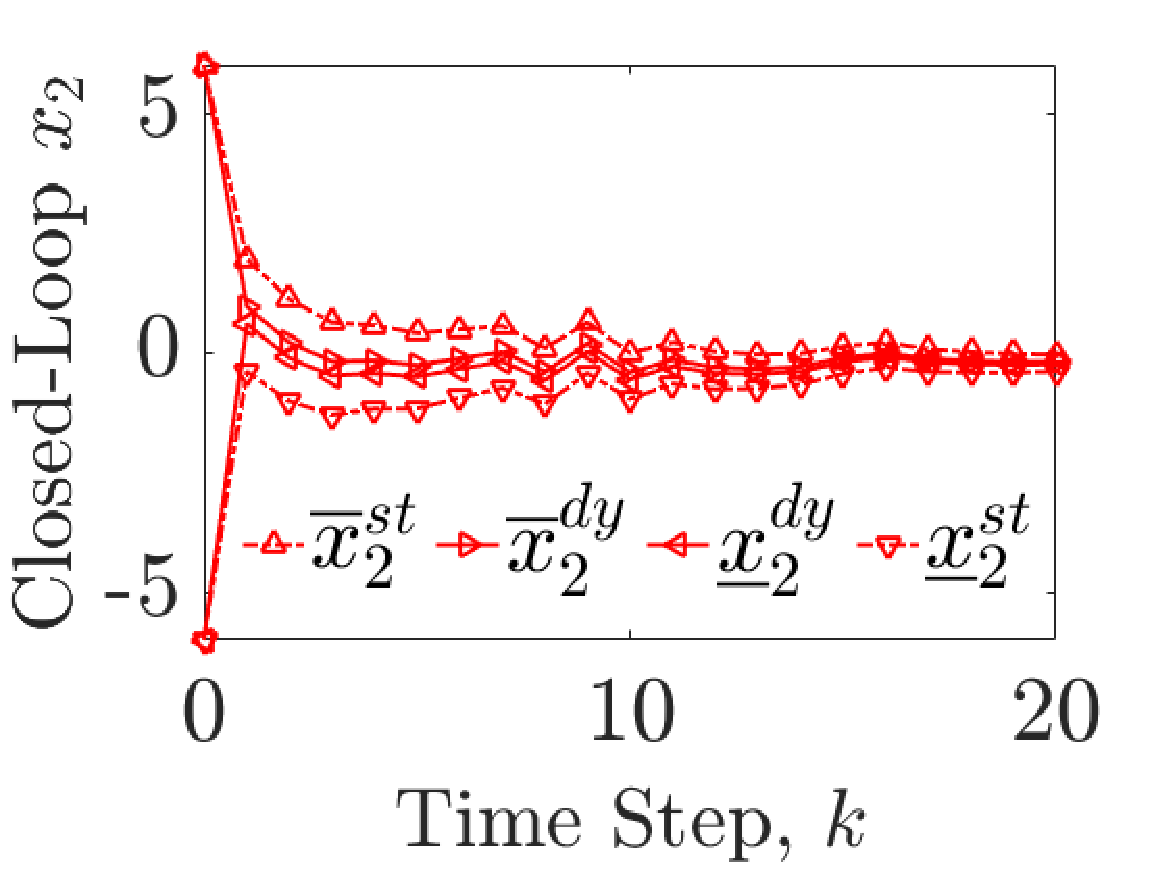}}
{\includegraphics[width=0.48\columnwidth]{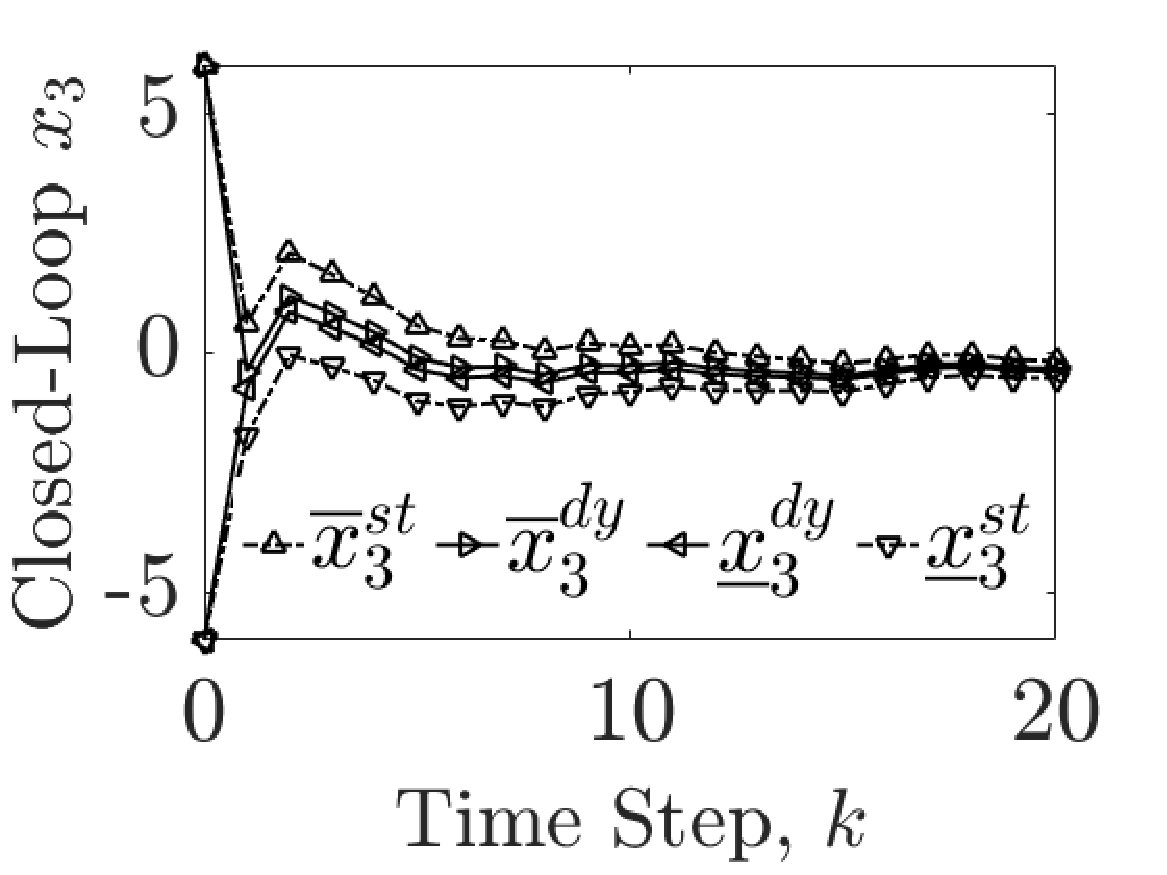}}
{\includegraphics[width=0.48\columnwidth]{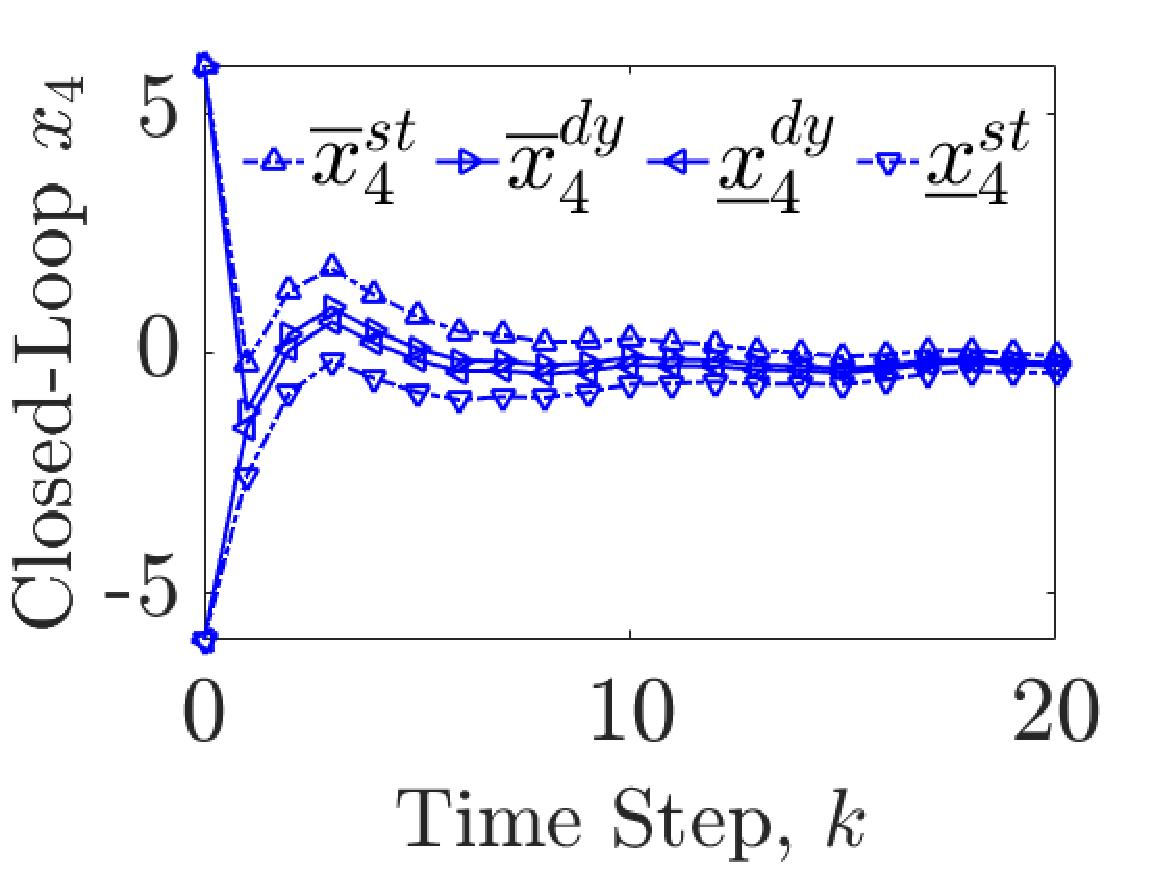}}
{\includegraphics[width=0.48\columnwidth]{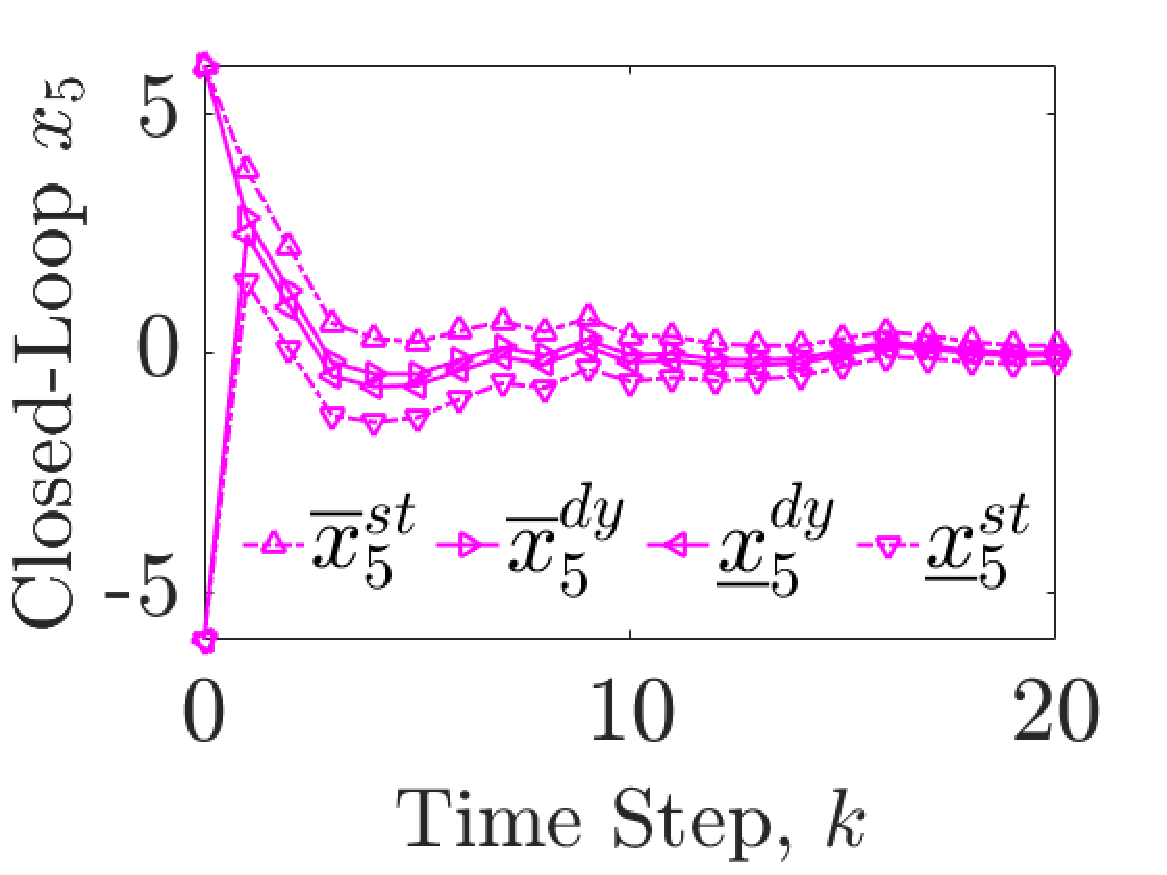}}
\caption{\small\mk{Open-loop states (first plot) and the closed-loop upper and lower framers returned by our proposed dynamic control design, i.e., $\overline{x}^{dy},\underline{x}^{dy}$, as well as the framers returned by the static feedback control approach in \cite{khajenejad2024optimalcontrolstatic}, i.e., $\overline{x}^{st},\underline{x}^{st}$ (second to sixth plots).}}
\label{fig:states}
\end{figure}
\section{Conclusion and Future Work} \label{sec:conclusion}
In this paper, an optimal dynamic stabilizing control for bounded Jacobian nonlinear discrete-time systems with nonlinear observations, subject to state and process noise, was presented. Rather than stabilizing the uncertain system directly, a higher-dimensional interval observer was stabilized, with its states containing the actual system states. The nonlinear control approach provided additional flexibility compared to linear and static methods, compensating for system nonlinearities and allowing tighter closed-loop intervals. A separation principle was established, enabling the design of observer and control gains, and tractable programs were derived for control gain design. An Extension to switched and hybrid systems, as well as other control strategies such as model predictive control will be considered for future work.
\balance
\bibliographystyle{unsrturl}

\bibliography{biblio}

\end{document}